\newtheorem{theorem}{Theorem}
\newtheorem{obs}[theorem]{Observation}
\newtheorem{lem}[theorem]{Lemma}
\newtheorem{cor}[theorem]{Corollary}
\newtheorem*{notation}{Notation}{\bf}{\rm}
\newtheorem*{defn}{Definition}{\bf}{\rm}
\DeclareMathOperator{\dimH}{dim_H}
\DeclareMathOperator{\dimalg}{dim_{alg}}
\DeclareMathOperator{\TIME}{TIME}
\DeclareMathOperator{\SPACE}{SPACE}
\newcommand{\dimdelta}{\dim_{\Delta}}
\renewcommand{\dim}{{\mathrm{dim}}}
\newcommand{\R}{\mathbb{R}}
\newcommand{\N}{\mathbb{N}}
\newcommand{\Q}{\mathbb{Q}}
\newcommand{\NP}{\textup{NP}}
\newcommand{\E}{\textup{E}}
\newcommand{\EXP}{\textup{EXP}}
\newcommand{\DEC}{\textup{DEC}}
\newcommand{\ESPACE}{\textup{ESPACE}}
\newcommand{\EXPSPACE}{\textup{EXPSPACE}}
\newcommand{\disj}{\textup{disj}}
\newcommand{\disjNP}{\textup{disjNP}}
\newcommand{\disjEXP}{\textup{disjEXP}}
\newcommand{\pSEL}{\textup{p-SEL}}
\newcommand{\qpSEL}{\textup{qp-SEL}}
\newcommand{\up}{\textup{p}}
\newcommand{\uP}{\textup{P}}
\newcommand{\qp}{\textup{qp}}
\newcommand{\all}{\textup{all}}
\newcommand{\comp}{\textup{comp}}
\newcommand{\alg}{\textup{alg}}
\newcommand{\uspace}{\textup{space}}
\newcommand{\trialph}{$\{0, 1, -1\}$}
\newcommand{\bin}{$\{0,1\}$}
\newcommand{\regSS}{S^\infty}
\numberwithin{equation}{section}
\numberwithin{theorem}{section}
\begin{document}

\title{Dimension and the Structure of Complexity Classes}

\author{Jack H. Lutz\footnote{Research supported in part by National Science Foundation grants 1545028 and 1900716.}\\Iowa State University\and Neil Lutz\\Swarthmore College\and Elvira Mayordomo\footnote{Research supported in part by Spanish Ministry of Science, Innovation and Universities grant PID2019-104358RB-I00.}\\Universidad de Zaragoza}

\date{\vspace{1.5em}Dedicated to the memory of Alan L. Selman}

\maketitle

\begin{abstract}
	We prove three results on the dimension structure of complexity classes.
	\begin{enumerate}
		\item The Point-to-Set Principle, which has recently been used to prove several new theorems in fractal geometry, has resource-bounded instances. These instances characterize the resource-bounded dimension of a set $X$ of languages in terms of the relativized resource-bounded dimensions of the individual elements of $X$, provided that the former resource bound is large enough to parameterize the latter. Thus for example, the dimension of a class $X$ of languages in $\EXP$ is characterized in terms of the relativized $\up$-dimensions of the individual elements of $X$.

		\item Every language that is $\leq^P_m$-reducible to a $\up$-selective set has $\up$-dimension 0, and this fact holds relative to arbitrary oracles. Combined with a resource-bounded instance of the Point-to-Set Principle, this implies that if $\NP$ has positive dimension in $\EXP$, then no quasipolynomial time selective language is $\leq^P_m$-hard for $\NP$.

		\item If the set of all disjoint pairs of $\NP$ languages has dimension 1 in the set of all disjoint pairs of $\EXP$ languages, then $\NP$ has positive dimension in $\EXP$.
	\end{enumerate}
\end{abstract}

\section{Introduction}

Alan Selman was a pioneer and a leader in elucidating the structure of complexity classes. He initiated many of the most important concepts of structural complexity theory, he investigated them brilliantly, and he inspired generations of computer scientists to contribute to this endeavor.

Our objective in this paper is to show how resource-bounded dimension, which is a generalization of classical Hausdorff dimension, can extend Selman's research program in fruitful new directions. To this end, we present three new results, one bringing the Point-to-Set Principle into complexity classes, one on dimension and $\up$-selective sets, and one on dimension and disjoint $\NP$ pairs. The rest of this introduction motivates and explains these three results.

Hausdorff dimension, developed in 1919~\cite{Haus19,Falc14}, is a scheme for assigning a \emph{dimension} $\dimH(E)$ to every subset $E$ of a given metric space. Assume for a moment that this metric space is a Euclidean space $\R^n$. Then $\dimH(\R^n) = n$, and the Hausdorff dimension is monotone, i.e., $E \subseteq F$ implies that $\dimH(E) \leq \dimH(F)$. For integers $d = 0,\ldots,n$, subsets $E$ of $\R^n$ that are intuitively $d$-dimensional have $\dimH(E) = d$. However, \emph{every} real number $s \in [0,n]$ is the Hausdorff dimension of infinitely many (in fact, $2^{|\R|}$ many) subsets of $\R$. In general, $\dimH(E) < n$ implies that $E$ is a Lebesgue measure 0 subset of $\R^n$. (The converse does not hold.) Hausdorff dimension can thus be regarded as a measure of the ``sizes'' of Lebesgue measure 0 subsets of $\R^n$. Hausdorff dimension has become a powerful tool for investigations in fractal geometry, probability theory, and other areas of mathematical analysis~\cite{Falc14,SteSha05,Matt15,BisPer17}.

We momentarily shift the focus of our discussion from Euclidean spaces $\R^n$ to another metric space, the \emph{Cantor space} $\mathbf{C}$ consisting of all \emph{decision problems}, which are equivalently regarded as subsets of $\{0,1\}^*$ or as infinite binary sequences. At the beginning of the present century, the first author proved a theorem characterizing Hausdorff dimension in $\mathbf{C}$ in terms betting strategies called \emph{gales}, which are minor but convenient generalization of martingales. Based on this characterization, he introduced two related methods for \emph{effectivizing} Hausdorff dimension, i.e., imposing computability or complexity constraints on these gales. The first of these methods~\cite{Lutz03a}, called \emph{resource-bounded dimension} imposes Hausdorff dimension structure on complexity classes. For example this theory defines, for every subset $X$ of $\mathbf{C}$, a quasipolynomial-time (i.e., $n^{\textup{polylog}\,n}$-time) dimension $\dim_\qp(X)$ in such a way that $\dim(X\mid \EXP) = \dim_\qp(X \cap \EXP)$ is a coherent notion of the dimension of $X$ within the complexity class $\EXP = \TIME(2^\textup{polynomial})$. The second method~\cite{DISS}, \emph{algorithmic dimension} (also called \emph{constructive dimension} or \emph{effective dimension}) has to date been more widely investigated, partly because of its interactions with algorithmic randomness (i.e., Martin-L\"{o}f randomness~\cite{MarL66}) and partly because of its applications to classical fractal geometry~\cite{LutLut20,LutMay21}. Algorithmic dimension plays a motivating role in this paper, but resource-bounded dimension is our main topic.

Several recent results in algorithmic fractal dimensions are based on the 2017 Point-to-Set Principle introduced by the first two authors~\cite{LutLut17}. This principle is a family of theorems, the first of which says that, for any set $E \subseteq \R^n$,
\begin{equation}\label{eq:psp}
  \dimH(E) = \adjustlimits\min_{A\in\mathbf{C}}\sup_{x\in E}\dim^A(x),
\end{equation}
where $\dim^A(x)$ is the algorithmic dimension of the individual point $x$ relative to the oracle $A$. This theorem completely characterizes the \emph{classical} Hausdorff dimensions of sets $E$ in terms of the relativized algorithmic dimensions of their elements $x$. The term ``classical'' here does not mean ``old,'' but rather refers to mathematical concepts and theorems that, like Hausdorff dimension, do not involve computability or logic in their formulations. Thus the left-hand side of~\eqref{eq:psp} is classical, but the right-hand side, involving computability, is not. The characterization theorem~\eqref{eq:psp} is called the \emph{Point-to-Set Principle for Hausdorff dimension}, because it enables one to prove lower bounds on the Hausdorff dimensions of \emph{sets} by reasoning about the relativized algorithmic dimensions of judiciously chosen individual \emph{points} in those sets. The paper~\cite{LutLut17} also proved a second instance of the Point-to-Set Principle that characterizes another classical fractal dimension, the packing dimension~\cite{Falc14}, in a manner dual to~\eqref{eq:psp}. These instances of the Point-to-Set Principle have recently been used to prove several new theorems in classical fractal geometry~\cite{LutStu20,LutStu18,Lutz21,LuLuMa20}. The authors also recently extended \eqref{eq:psp} and its dual from $\R^n$ to arbitrary separable metric spaces and to Hausdorff and packing dimensions with very general gauge families~\cite{LuLuMa20}.

The above instances of the Point-to-Set Principle characterize classical fractal dimensions of sets in terms of the relativized algorithmic dimensions of the individual elements of those sets. In Section~\ref{sec:psp} below, we prove more general instances of the Point-to-Set Principle that characterize the \emph{classical or perhaps somewhat effective} dimensions of sets in $\mathbf{C}$ in terms of the relativized \emph{more effective} dimensions of the individual elements of those sets. One example of this says that, for every subset $X$ of $\mathbf{C}$,
\begin{equation}\label{eq:intro2}
  \dimH(X)= \adjustlimits\min_{B\in\mathbf{C}}\sup_{A\in X}\dim_\up^B(A).
\end{equation}
That is, we can replace the algorithmic dimension on the right-hand side of~\eqref{eq:psp} by the more effective polynomial-time dimension. Another example characterizes the quasipolynomial-time dimension of each subset $X$ of $\mathbf{C}$ by
\begin{equation}\label{eq:intro3}
  \dim_\qp(X)= \adjustlimits\min_{B\in \EXP}\sup_{A\in X}\dim_\up^{\langle B\rangle}(A),
\end{equation}
i.e., in terms of the more effective polynomial-time dimensions of the individual elements $A$ of $X$. (The ``$\langle B\rangle$'' refers to a technically restricted relativization of $\up$-dimension to the oracle $B$ explained in Section~\ref{sec:psp}.). This implies that, for every subset $X$ of $\mathbf{C}$ and every $\EXP$-complete language $C$,
\begin{equation}\label{eq:intro4}
  \dim(X\mid \EXP) = \sup_{A\in X\cap\EXP}\dim_\up^{\langle C\rangle}(A).
\end{equation}
The instances~\eqref{eq:intro2},~\eqref{eq:intro3}, and~\eqref{eq:intro4} are all special cases of Theorem~\ref{thm:psprbd} in Section~\ref{sec:psp}.

In 1979, Alan Selman adapted Jockusch's computability-theoretic notion of semirecursive sets \cite{Jock68}, creating the complexity-theoretic notion of $\up$-selective sets~\cite{Selm79}. Briefly, a decision problem $A \subseteq \{0,1\}^*$ is \emph{$\up$-selective}, and we write $A \in \pSEL$, if there is a polynomial-time algorithm that, given an ordered pair $(x,y)$ of strings $x,y \in \{0,1\}^*$, outputs a string $z \in \{x,y\}$ such that $\{x,y\} \setminus A\neq \emptyset \implies z \in A$. (We note that the terms``$\up$-selective'' and ``$\uP$-selective'' have both been widely used for this notion. In fact, both have been used in papers with Selman as an author.) Every set $A \in P$ is clearly $\up$-selective, but there are uncountably many $\up$-selective sets, so the converse does not hold. There is an extensive literature on $\up$-selective sets and the related notions that they have spawned. We especially refer the reader to the books by Hemaspaandra and Torenvliet~\cite{HemTor03} and Zimand~\cite{Zima04} and the references therein.

Selman~\cite{Selm79} proved that no $\up$-selective set can be $\leq^P_m$-hard for $\EXP$ and that, if $\uP\neq\NP$, then no $\up$-selective set can be $\leq^P_m$-hard for $\NP$. In order to extend the class of provably intractable problems, the first author~\cite{Lutz90} defined a language $H$ to be \emph{weakly $\leq^P_m$-hard} for $\EXP$ if $\mu(P_m(H)\mid\EXP) \neq 0$, i.e., if the set $P_m(H)$ of languages $A$ such that $A\leq^P_m H$ does not have measure 0 in $\EXP$ in the sense of resource-bounded measure~\cite{AEHNC,QSET,Zima04}. Buhrman and Longpr\'{e}~\cite{BuhLon96} and, independently, Wang~\cite{Wang96} proved that $\mu(P_m(\pSEL)\mid\EXP) = 0$, where for a class $X\subseteq\mathbf{C}$, $P_m(X)=\bigcup_{H\in X}(P_m(H))$. It follows that no $\up$-selective set can be weakly $\leq^P_m$-hard for $\EXP$. (They in fact proved the stronger fact that this also holds for $\leq^P_{tt}$-reductions.) See~\cite{Zima04} for a host of related results.

After the development of resource-bounded dimension~\cite{Lutz03a}, Ambos-Spies, Merkle, Reimann, and Stephan~\cite{AMRS01} defined a language $H$ to be \emph{partially $\leq^P_m$-hard} for $\EXP$ if $\dim(P_m(H)\mid \EXP) > 0$. It is clear that weak hardness implies partial hardness, and it was shown in~\cite{AMRS01} that the converse does not hold. In Section~\ref{sec:sel} we use Theorem~\ref{thm:psprbd} (i.e., the Point-to-Set Principle) to prove that $\dim(P_m(\qpSEL)\mid \EXP) = 0$, where the set $\qpSEL$ of $\qp$-selective sets is the obvious quasipolynomial-time analog of $\pSEL$. This implies that no $\qp$-selective set can be partially $\leq^P_m$-hard for $\EXP$ and that, if $\dim(\NP\mid\EXP) > 0$, then no $\qp$-selective set can be $\leq^P_m$-hard for $\NP$.

In 1984, Even, Selman, and Yacobi~\cite{EvSeYa84} defined a \emph{promise problem} to be an ordered pair $(A,B)$ of disjoint languages. A solution of a promise problem $(A,B)$ is an algorithm or other device that decides \emph{any separator} of $(A,B)$, i.e., any language $S$ such that $A \subseteq S$ and $S \cap B = \emptyset$. Intuitively, we are \emph{promised} that every input will be an element of $A \cup B$, so we are only required to correctly distinguish inputs in $A$ from inputs in $B$.

A \emph{disjoint $\NP$ pair} is a promise problem $(A,B)$ with $A, B \in \NP$. Disjoint $\NP$ pairs were first investigated by Selman and collaborators to better understand public key cryptosystems~\cite{EvSeYa84,GroSel88,Selm89,HomSel92}. Razborov~\cite{Razb94} later established a deep connection between disjoint $\NP$ pairs and propositional proof systems, associating with each propositional proof system a canonical disjoint $\NP$ pair. Gla{\ss}er, Selman, Sengupta, and Zhang~\cite{GSSZ04,GlSeSe05,GlSeZh07,GlSeZh09} investigated this connection further, and it is now known that the degree structure of propositional proof systems under the natural notion of proof simulation is identical to the degree structure of disjoint $\NP$ pairs under reducibility of separators. See~\cite{GHSW14} for a survey of this and related results and~\cite{DosGla20} for more recent work.

In 2012, Fortnow, the first author, and the third author~\cite{FoLuMa12} investigated strong hypotheses involving the intractability of disjoint $\NP$ pairs. Among other things, this paper proved that
\begin{equation}\label{eq:intro5}
       \mu(\disjNP\mid\disjEXP) \neq 0 \implies \mu(\NP\mid\EXP) \neq 0
\end{equation}
and that $\mu(\NP\mid\EXP)\neq 0$ implies the existence, for every $k$, of disjoint $\NP$ pairs that cannot be separated in $2^{n^k}$ time. (Here $\disjNP$ is the set of disjoint $\NP$ pairs, and $\disjEXP$ is the set of disjoint $\EXP$ pairs, the latter endowed with a natural measure.)

In Section~\ref{sec:disj}, we prove a dimension-theoretic analog of~\eqref{eq:intro5}, namely that
\begin{equation}\label{eq:intro6}
    \dim(\disjNP\mid\disjEXP) = 1 \implies \dim(\NP\mid\EXP) > 0.
\end{equation}

Our proof of~\eqref{eq:intro6} is somewhat simplified by the use of Theorem~\ref{thm:psprbd} (i.e., the Point-to-Set Principle).

\section{Resource Bounds}\label{sec:rb}
We work in the Cantor space $\mathbf{C}$ consisting of all \emph{decision problems} (i.e., \emph{languages}) $A\subseteq\{0,1\}^*$. We identify each decision problem $A$ with its characteristic sequence
\[\llbracket s_0\in A\rrbracket\ \llbracket s_1\in A\rrbracket\ \llbracket s_2\in A\rrbracket\ldots,\]
where $s_0,s_1,s_2,\ldots$ is the standard enumeration of $\{0,1\}^*$ and 
\[\llbracket\varphi\rrbracket=\textbf{if}\ \varphi\ \textbf{then}\ 1\ \textbf{else}\ 0\]
is the Boolean value of a statement $\varphi$. We thus regard $\mathbf{C}$ as either the power set $\mathcal{P}(\{0,1\}^*)$ of $\{0,1\}^*$ or as the set $\{0,1\}^\omega$ of all infinite binary sequences, whichever is most convenient in a given context.

A \emph{resource bound} in this paper is any one of several classes of functions from $\{0,1\}^*$ to $\{0,1\}^*$ that we now specify.

The largest resource bound is the set
\[\all=\big\{f\mid f:\{0,1\}^*\to\{0,1\}^*\big\}\]
we also use the resource bound
\[\comp=\{f\in\all\mid f\ \text{is computable}\}.\]

As in~\cite{LutzThesis,Lutz92,Lutz03a}, we define a hierarchy $G_0,G_1,G_2,\ldots$ of classes of \emph{growth rates} $f:\N\to\N$ by the following recursion. (All logarithms in this paper are base-2.)
\begin{align*}
	G_0&=\{f\mid (\exists k)(\forall^\infty n)f(n)\leq kn\}\\
	G_{i+1}&=2^{G_i(\log n)}=\left\{f\;\middle|\;(\exists g\in G_i)(\forall^\infty n)f(n)\leq 2^{g(\log n)}\right\}.
\end{align*}
Note that $G_0$ is the class of $O(n)$ growth rates and that $G_1$ is the class of polynomially bounded growth rates. For each $i\in\N$, define a canonical growth rate $\hat{g}_i\in G_i$ by $\hat{g}_0(n)=2n$ and $\hat{g}_{i+1}(n)=2^{\hat{g}_i(\log n)}$. It is easy to verify that each $G_i$ is closed under composition, that each $f\in G_i$ is $o(\hat{g}_{i+1})$, and that each $\hat{g}_i$ is $o(2^n)$. Thus all growth rates in the $G_i$-hierarchy are subexponential.

Within the resource bound $\comp$, we use the resource bounds
\[\up_i=\{f\in\all\mid f\ \text{is computable in}\ G_i\ \text{time}\}\quad(i\geq 1)\]
and
\[\up_i\uspace=\{f\in\all\mid f\ \text{is computable in}\ G_i\ \text{space}\}\quad(i\geq 1).\]
(The length of the output \emph{is} included as part of the space used in computing $f$.) We write $\up$ for the polynomial-time resource bound $\up_1$ and $\qp$ for the quasipolynomial-time resource bound $\up_2$. Similarly the notations $\up\uspace$ and $\qp\uspace$ denote the space resource bounds $\up_1\uspace$ and $\up_2\uspace$, respectively.

In this paper, a \emph{resource bound} $\Gamma$ or $\Delta$ is one of the classes $\all$, $\comp$, $\up_i$ $(i\geq 1)$, $\up_i\uspace$ $(i\geq 1)$ defined above. We will also use relativizations $\Delta^A$ or $\Delta^g$ of a resource bound $\Delta$ to oracles $A\subseteq\{0,1\}^*$ or function oracles $g:\{0,1\}^*\to\{0,1\}^*$.

A \emph{constructor} is a function $\delta:\{0,1\}^*\to \{0,1\}^*$ such that $\delta(w)$ is a proper extension of $w$ (i.e., $w$ is a proper prefix of $\delta(w)$) for all $w\in\{0,1\}^*$. The \emph{result} of a constructor $\delta$ is the unique sequence $R(\delta)\in\mathbf{C}$ such that $\delta^n(\lambda)$ is a prefix of $R(\delta)$ for all $n\in\N$. (Here $\delta^n(\lambda)$ is the $n$-fold application of $\delta$ to the empty string $\lambda$.)

The \emph{result class} of a resource bound $\Delta$ is the class $R(\Delta)$ consisting of all languages $R(\delta)$ such that $\delta\in\Delta$ is a constructor. The following facts are easily verified.
\begin{enumerate}
	\item $R(\all)=\mathbf{C}$.
	\item $R(\comp)=\DEC$, the set of all decidable languages.
	\item For all $i\geq 1$,
	\[R(\up_i)=\E_i=\TIME(2^{G_{i-1}}).\]
	In particular,
	\[R(\up)=\E=\TIME(2^{\textup{linear}})\]
	and
	\[R(\qp)=\EXP=\TIME(2^{\textup{poly}}).\]
	\item For all $i\geq 1$,
	\[R(\up_i\uspace)=\E_i\SPACE=\SPACE(2^{G_{i-1}}).\]
	In particular,
	\[R(\up\uspace)=\ESPACE=\SPACE(2^{\textup{linear}})\]
	and
	\[R(\qp\uspace)=\EXPSPACE=\SPACE(2^{\textup{poly}}).\]
\end{enumerate}
Many of our functions will be of the form $f:D\to[0,\infty)$, where $D$ is a discrete domain such as $\{0,1\}^*$ or $\N\times\{0,1\}^*$ and $[0,\infty)$ is the set of nonnegative real numbers. If $\Delta$ is a resource bound, then such a function $f$ is \emph{$\Delta$-computable} if there is a rational-valued function $\hat{f}:D\times\N\to\Q\cap[0,\infty)$ such that $\lvert \hat{f}(r,x)-f(x)\rvert\leq 2^{-r}$ for all $x\in D$ and $r\in\N$ and $\hat{f}\in \Delta$ (with $r$ coded in unary and $\hat{f}(x,r)$ coded in binary).

We say that $f$ is \emph{lower semicomputable} if there is a computable function $\hat{f}:D\times \N\to\Q\cap[0,\infty)$ such that the following two conditions hold for all $x\in D$.
\begin{enumerate}
	\item[(i)] For all $t\in \N$, $\hat{f}(x,t)\leq\hat{f}(x,t+1)\leq f(x)$.
	\item[(ii)] $\displaystyle \lim_{t\to\infty}\hat{f}(x,t)=f(x)$.
\end{enumerate}

\section{Resource-Bounded Dimensions}

This section briefly reviews the elements of resource-bounded dimension developed in~\cite{Lutz03a}.

\begin{defn}
	\begin{enumerate}
		\item For $s\in[0,\infty)$, an \emph{$s$-gale} is a function $d:\{0,1\}^*\to[0,\infty)$ such that, for all $w\in\{0,1\}^*$,
		\[d(w)=2^{-s}[d(w0)+d(w1)].\]
		\item A \emph{martingale} is a 1-gale.
	\end{enumerate}
\end{defn}

\begin{obs}[\cite{DISS}]\label{obs:diss}
	A function $d:\{0,1\}^*\to[0,\infty)$ is an $s$-gale if and only if the function $d':\{0,1\}^*\to[0,\infty)$ defined by $d'(w)=2^{(1-s)|w|}d(w)$ is a martingale.
\end{obs}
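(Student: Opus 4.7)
The plan is to verify the biconditional by a direct algebraic calculation, since the relationship $d'(w) = 2^{(1-s)|w|}d(w)$ is invertible for every $s \in [0,\infty)$ and each factor $2^{(1-s)|w|}$ is strictly positive. First I would rewrite both defining identities in a common additive form: the function $d$ is an $s$-gale if and only if $d(w0) + d(w1) = 2^{s}d(w)$ for every $w \in \{0,1\}^*$, and $d'$ is a martingale if and only if $d'(w0) + d'(w1) = 2 d'(w)$ for every $w$. This reduces the observation to comparing these two identities under the substitution.

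For the forward direction, I would assume that $d$ is an $s$-gale and use $|w0| = |w1| = |w|+1$ to compute
\[
d'(w0) + d'(w1) = 2^{(1-s)(|w|+1)}\bigl[d(w0) + d(w1)\bigr] = 2^{1-s} \cdot 2^{(1-s)|w|} \cdot 2^{s} d(w) = 2 \cdot 2^{(1-s)|w|}d(w) = 2d'(w),
\]
so $d'$ is a martingale. The reverse direction runs the same chain of equalities backwards: starting from the martingale identity $d'(w0) + d'(w1) = 2 d'(w)$, substituting $d'(u) = 2^{(1-s)|u|}d(u)$, and then dividing through by the positive factor $2 \cdot 2^{(1-s)|w|}$ recovers $d(w0) + d(w1) = 2^{s}d(w)$, i.e., the $s$-gale identity.

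There is essentially no obstacle here; the observation is a bookkeeping statement about the change of variables $d \mapsto d'$, and the exponent $(1-s)|w|$ is calibrated precisely so that the discount factors $2^{-s}$ and $2^{-1}$ appearing in the two averaging identities get absorbed correctly by the extra length increment at each child. Consequently the proof is a single two-line symmetric computation with no case analysis.
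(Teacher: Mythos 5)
Your computation is correct and is exactly the standard argument for this observation, which the paper states without proof (citing~\cite{DISS}): the substitution $d'(w)=2^{(1-s)|w|}d(w)$ converts the identity $d(w0)+d(w1)=2^{s}d(w)$ into $d'(w0)+d'(w1)=2d'(w)$ and back, since the conversion factor is positive and invertible. Nothing further is needed.
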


An $s$-gale $d$ \emph{succeeds} on a language $A\subseteq\{0,1\}^*$, and we write $A\in S^\infty[d]$, if
\[\limsup_{w\to A} d(w)=\infty,\]
where the limit superior is taken over successively longer prefixes of $A$.

\begin{notation}
	For $X\subseteq\mathbf{C}$, let $\mathcal{G}(X)$ be the set of all $s\in[0,\infty)$ such that there is an $s$-gale $d$ for which $X\subseteq S^\infty[d]$.
\end{notation}

Readers unfamiliar with fractal geometry can safely use the following characterization as the definition of the \emph{Hausdorff dimension} $\dimH(X)$ of each set $X\subseteq\mathbf{C}$.

\begin{theorem}[gale characterization of Hausdorff dimension~\cite{Lutz03a}]\label{thm:galechar}
	For all $X\subseteq\mathbf{C}$,
	\[\dimH(X)=\inf\mathcal{G}(X).\]
\end{theorem}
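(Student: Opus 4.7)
The plan is to prove both inequalities $\dimH(X)\leq \inf\mathcal{G}(X)$ and $\dimH(X)\geq \inf\mathcal{G}(X)$ by moving between $s$-gales and covers of $X$ by cylinders, using Observation~\ref{obs:diss} to translate gale sums into martingale sums whenever that is convenient. Recall that on $\mathbf{C}$ the natural Hausdorff $s$-premeasure at scale $2^{-N}$ is
\[
\mathcal{H}^s_{2^{-N}}(X)=\inf\Bigl\{\,\sum_{w\in W}2^{-s|w|}\;:\;W\subseteq\{0,1\}^{\geq N}\ \text{and}\ X\subseteq\bigcup_{w\in W}C_w\,\Bigr\},
\]
and $\dimH(X)=\inf\{s\geq 0: \lim_{N\to\infty}\mathcal{H}^s_{2^{-N}}(X)=0\}$.

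For the direction $\dimH(X)\leq\inf\mathcal{G}(X)$, fix $s\in\mathcal{G}(X)$ witnessed by an $s$-gale $d$ with $d(\lambda)=1$, and fix $N,k\in\N$. Let $W_{N,k}$ be the set of $\sqsubseteq$-minimal strings $w$ with $|w|\geq N$ and $d(w)\geq 2^k$; this set is prefix-free, and since $d$ succeeds on every $A\in X$, the cylinders $\{C_w:w\in W_{N,k}\}$ cover $X$. Applying Observation~\ref{obs:diss} and the standard optional stopping identity for the martingale $d'(w)=2^{(1-s)|w|}d(w)$ along the prefix-free cut $W_{N,k}$ yields $\sum_{w\in W_{N,k}}2^{-s|w|}d(w)\leq d(\lambda)=1$, hence $\sum_{w\in W_{N,k}}2^{-s|w|}\leq 2^{-k}$. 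This gives $\mathcal{H}^s_{2^{-N}}(X)\leq 2^{-k}$; letting $k\to\infty$ then $N\to\infty$ shows $\mathcal{H}^s(X)=0$, so $\dimH(X)\leq s$, and then an infimum over $s\in\mathcal{G}(X)$ finishes this half.

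For the converse, suppose $\dimH(X)<s$, so $\mathcal{H}^s(X)=0$. For each $n\in\N$, choose a cover $\mathcal{U}_n=\{w_{n,i}\}_i$ of $X$ by cylinders such that $\sum_i 2^{-s|w_{n,i}|}\leq 2^{-n}$. For each $i$, define an ``all-or-nothing'' $s$-gale $f_{n,i}$ with $f_{n,i}(\lambda)=2^{-s|w_{n,i}|}$ that bets the entire purse along the path to $w_{n,i}$ (so $f_{n,i}(w)=2^{-s(|w_{n,i}|-|w|)}$ for $w\sqsubseteq w_{n,i}$, $f_{n,i}(w)=0$ for any $w$ of length $\leq|w_{n,i}|$ that diverges from $w_{n,i}$) and subsequently just runs passively, so that $f_{n,i}(w_{n,i})=1$. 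Set $d_n=\sum_i f_{n,i}$ and $d=\sum_n d_n$; then $d$ is an $s$-gale with $d(\lambda)\leq\sum_n 2^{-n}<\infty$ (finite sums of $s$-gales are $s$-gales, and the infinite sum converges monotonically). For every $A\in X$ and every $n$, some $w_{n,i(n)}\in\mathcal{U}_n$ is a prefix of $A$, giving $d(w_{n,i(n)})\geq f_{n,i(n)}(w_{n,i(n)})=1$; refining the construction by grouping the covers so that each $d_n$ contributes at least $2^n$ on some prefix of $A$ (e.g. by replacing $\mathcal{U}_n$ by the cover produced from $\mathcal{H}^s(X)=0$ with weight $\leq 2^{-2n}$ and scaling $f_{n,i}$ up by $2^n$) then yields $\limsup_{w\to A}d(w)=\infty$, i.e.\ $X\subseteq S^\infty[d]$, so $s\in\mathcal{G}(X)$.

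The main technical obstacle is the construction of the succeeding $s$-gale in the second half: after the ``bet'' portion reaches $w_{n,i}$, the gale has capital $1$ but cannot remain constant while satisfying the $s$-gale recurrence when $s<1$, so passive extension decays, and the contributions from a single $n$ only guarantee capital $1$ (not $\infty$) along $A$. The fix is the rescaling outlined above, choosing covers with geometrically shrinking weights so that for any $A\in X$ infinitely many of the $d_n$ each contribute an unboundedly growing amount on successive prefixes of $A$; verifying that these rescaled contributions are not cancelled by the passive decay on the already-counted portions is where the bookkeeping lives.
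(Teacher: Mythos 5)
Your proposal is correct, and it is essentially the standard argument for this result: the paper itself does not prove Theorem~\ref{thm:galechar} but imports it from~\cite{Lutz03a}, where the proof proceeds exactly as you describe --- the Kraft-type inequality $\sum_{w\in W}2^{-s|w|}d(w)\leq d(\lambda)$ over prefix-free cuts where the gale's capital first exceeds $2^k$ in one direction, and sums of all-or-nothing $s$-gales built from covers of geometrically shrinking weight (rescaled by $2^n$) in the other. The only point you leave implicit in the second half is why ``a prefix of $A$ with $d$-value at least $2^n$ for every $n$'' forces $\limsup_{w\to A}d(w)=\infty$ rather than merely an unbounded supremum over short prefixes; this follows in one line from $d(v)\leq 2^{s|v|}d(\lambda)$, which forces those prefixes to have unbounded length (or, alternatively, require the $n$th cover to use only cylinders of length at least $n$, as the definition of $\mathcal{H}^s$ permits). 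There is no cancellation issue to worry about, since $d$ is a sum of nonnegative gales and hence dominates each rescaled term pointwise.
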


Intuitively, an $s$-gale is a strategy for betting on the successive bits of languages $A\in\mathbf{C}$. The payoffs of these bets are fair if $s=1$ and unfair if $s<1$. Intuitively and roughly, Theorem~\ref{thm:galechar} says that the Hausdorff dimension of $X$ is the \emph{most hostile betting environment} in which a gambler can succeed on every language $A\in X$.

Motivated by the above characterization of classical Hausdorff dimension, the first author defined resource-bounded dimensions and algorithmic dimensions as follows.

\begin{notation}[\cite{Lutz03a,DISS}]
	Let $\Delta$ be a resource bound, and let $X\subseteq\mathbf{C}$.
	\begin{enumerate}
		\item $\mathcal{G}_\Delta(X)$ is the set of all $s\in[0,\infty)$ such that there is a $\Delta$-computable $s$-gale $d$ for which $X\subseteq S^\infty[d]$.
		\item $\mathcal{G}_{\alg}(X)$ is the set of all $s\in[0,\infty)$ such that there is a lower semicomputable $s$-gale $d$ for which $X\subseteq S^\infty[d]$.
	\end{enumerate}
\end{notation}

\begin{defn}[\cite{Lutz03a,DISS}]
	Let $\Delta$ be a resource bound, let $X\subseteq \mathbf{C}$, and let $A\in\mathbf{C}$.
	\begin{enumerate}
		\item\label{defn:dim1} The \emph{$\Delta$-dimension} of $X$ is
		\[\dim_\Delta(X)=\inf\mathcal{G}_\Delta(X).\]
		\item\label{defn:dim2} The \emph{$\Delta$-dimension} of $X$ \emph{in $R(\Delta)$} is
		\[\dim(X\mid R(\Delta))=\dim_\Delta(X\cap R(\Delta)).\]
		\item The \emph{$\Delta$-dimension} of $A$ is
		\[\dim_\Delta(A)=\dim_\Delta(\{A\}).\]
		\item\label{defn:dim4} The \emph{algorithmic dimension} of $X$ is
		\[\dimalg(X)=\inf\mathcal{G}_\alg(X).\]
		\item The \emph{algorithmic dimension} of $A$ is
		\[\dim(A)=\dimalg(\{A\}).\]
	\end{enumerate}
	(Algorithmic dimension has also been called \emph{constructive dimension} and \emph{effective dimension}.)
\end{defn}

The papers~\cite{Lutz03a,DISS} showed that the above-defined dimensions are coherent, well-behaved ``versions'' of Hausdorff dimension. All the defined dimensions lie in $[0,1]$, and all can take any real value in $[0,1]$. The dimensions~\ref{defn:dim1}.,~\ref{defn:dim2}., and~\ref{defn:dim4}., have the crucial dimension properties that they are monotone in $X$ and that they are \emph{stable} in the sense that the dimension of $X\cup Y$ is the maximum of the dimensions of $X$ and $Y$. Classical Hausdorff dimension (i.e., $\dimH=\dim_\all$) is also \emph{countably stable}, meaning that
\begin{equation}\label{eq:countablestability}
	\dimH\left(\bigcup_{i\in I} X_i\right)=\sup_{i\in I}\dimH(X_i)
\end{equation}
holds for all countable index sets $I$. The dimensions~\ref{defn:dim1}. and~\ref{defn:dim2}. are not countably stable for $\Delta$ smaller than $\text{all}$, but they are \emph{$\Delta$-countably stable} in that~\eqref{eq:countablestability} holds if the countable union is ``$\Delta$-effective.'' The algorithmic dimension~\ref{defn:dim4}. is \emph{absolutely stable} in the sense that~\eqref{eq:countablestability} holds, regardless of whether $I$ is countable. In particular, this implies that, for all $X\subseteq\mathbf{C}$,
\begin{equation}\label{eq:supdim}
	\dimalg(X)=\sup_{A\in X}\dim(A).
\end{equation}
As a consequence of~\eqref{eq:supdim}, investigations of algorithmic dimension focus almost entirely on the dimensions $\dim(A)$ of \emph{individual} languages (or, in other contexts, individual sequences or individual points in a metric space) $A$.

Turning to complexity classes, i.e., the cases where $\Delta$ is some resource bound $\up_i$ or $\up_i\uspace$, the dimension~\ref{defn:dim2}. is non-degenerate in the sense that $\dim(R(\Delta)\mid R(\Delta))=1$. If $X\subseteq R(\Delta)$ is finite or even ``$\Delta$-countable,'' then $\dim(X\mid R(\Delta))=0$. This implies for example that, for each fixed $k\in\N$,
\begin{equation}\label{eq:smallinbig}
	\dim(\TIME(2^{kn})\mid \E)=\dim(\TIME(2^{n^k})\mid \EXP)=0.
\end{equation}

Finally, we mention interactions of dimensions with randomness. A language $A\in\mathbf{C}$ is \emph{$\Delta$-random} if no $\Delta$-computable martingale succeeds on it~\cite{Lutz92}. A language $A\in\mathbf{C}$ is \emph{algorithmically random} (or \emph{Martin-L\"{o}f random}~\cite{MarL66}) if no lower semicomputable martingale succeeds on it. Since a martingale is a 1-gale, this implies that $\dim_\Delta(A)=1$ holds for every $\Delta$-random language and $\dim(A)=1$ holds for every algorithmically random language. In neither case does the converse hold.

\section{The Point-to-Set Principle}\label{sec:psp}

As noted in the introduction, previous instances of the Point-to-Set Principle have characterized \emph{classical} fractal dimensions of sets in terms of the relativized \emph{algorithmic} dimensions of the elements of these sets. Here we make the Point-to-Set Principle more widely applicable by proving instances of it in which ``classical'' and ``algorithmic'' are replaced by resource bounds $\Delta$ and $\Gamma$, respectively, with $\Gamma$ smaller (``more effective'') than $\Delta$.

To this end, we partially order our resource bounds by
\[\up_i<\up_{i+1}<\comp,\]
\[\up_i\uspace<\up_{i+1}\uspace<\comp,\]
and
\[\up_i\leq\up_i\uspace\]
for all $i\leq 1$ and
\[\comp<\all.\]
Aside from reflecting current knowledge about the inclusions among these classes, this ordering has the crucial property that, if $\Gamma$ and $\Delta$ are resource bounds with $\Gamma<\Delta$, then $\Delta$ \emph{parameterizes} $\Gamma$ in the sense that there is a function $f\in\Delta$ such that
\[\Gamma=\{f_k\mid k\in\N\},\]
where each $f_k:\{0,1\}^*\to\{0,1\}^*$ is the $k$\textsuperscript{th} \emph{slice} of $f$, defined by $f_k(x)=f(0^k1x)$ for all $x\in\{0,1\}^*$. Moreover, this parameterization \emph{relativizes} in the sense that, for each function oracle $g:\{0,1\}^*\to\{0,1\}^*$, there is a function $f^g\in\Delta^g$ such that
\[\Gamma^g=\{f_k^g\mid k\in\N\}.\]

\begin{theorem}\label{thm:precursor}
	If $\Gamma$ and $\Delta$ are resource bounds with $\Gamma<\Delta$, then for each function oracle $g:\{0,1\}^*\to\{0,1\}^*$, there is a $\Delta^g$-computable function $d^g$ such that $\{d_k^g\mid k\in\N\}$ is the set of all martingales that are $\Gamma^g$-computable and satisfy $d_k^g(\lambda)\leq 1$.
\end{theorem}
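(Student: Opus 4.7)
The plan is to combine the parameterization $\Gamma^g=\{f_k^g\}_k$ afforded by $\Gamma<\Delta$ with a uniform ``martingale rectification'' that leaves every $\Gamma^g$-computable martingale with $d(\lambda)\leq 1$ in place while mapping arbitrary slices into such martingales. Fix $f^g\in\Delta^g$ whose slices enumerate the $\Gamma^g$-computable functions (so, via the standard encoding of approximators, all $\Gamma^g$-computable real-valued functions on $\{0,1\}^*$ arise as slices). For each $k$, define $d_k^g:\{0,1\}^*\to[0,\infty)$ by recursion on $|w|$:
\begin{align*}
d_k^g(\lambda) &= \min\bigl(1,\,f_k^g(\lambda)\bigr),\\
d_k^g(u0) &= \max\bigl(0,\,\min(f_k^g(u0),\,2\,d_k^g(u))\bigr),\\
d_k^g(u1) &= 2\,d_k^g(u) - d_k^g(u0).
\end{align*}
By inspection, $d_k^g\geq 0$ and $d_k^g(u)=\tfrac12\bigl(d_k^g(u0)+d_k^g(u1)\bigr)$, so $d_k^g$ is a martingale with $d_k^g(\lambda)\leq 1$. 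If $f_k^g$ happens to already be a martingale with $f_k^g(\lambda)\leq 1$, then $d_k^g=f_k^g$: the base case is immediate, and in the inductive step the martingale identity together with $f_k^g\geq 0$ forces $0\leq f_k^g(u0)\leq 2f_k^g(u)$, so the clipping is trivial.

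This gives the set equality $\{d_k^g\mid k\in\N\}=\{m : m\text{ is a }\Gamma^g\text{-computable martingale with }m(\lambda)\leq 1\}$: the inclusion $\supseteq$ holds because every such $m$ equals $f_k^g$ for some $k$, hence $d_k^g$; the inclusion $\subseteq$ holds because every $d_k^g$ is, by construction, a martingale with value at $\lambda$ at most $1$, and is $\Gamma^g$-computable since $d_k^g(w)$ is obtained from the $|w|+1$ values $f_k^g(v)$, $v\preceq w$, by constant-time arithmetic per step---an overhead that every resource bound $\up_i$ or $\up_i\uspace$ absorbs. Finally, set $d^g(0^k 1 x):=d_k^g(x)$ and extend arbitrarily elsewhere; to approximate $d^g(0^k 1 w)$ to error $2^{-r}$, query the $\Delta^g$-approximator $\hat f^g$ at the $|w|+1$ prefixes of $0^k 1 w$ with precision $r+O(|w|)$ and unwind the recurrence, which stays within $\Delta^g$.

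\emph{Main obstacle.} The recurrence is written in exact real arithmetic, but must be executed using the approximator $\hat f^g$. The main technical care lies in propagating $2^{-r}$-precision through the nested $\min$, $\max$, and $x\mapsto 2a-x$, each of which is $1$-Lipschitz (up to a factor of $2$) in the offending argument: one level of the recurrence can double the error, so the precision schedule must be inflated by $O(|w|)$ uniformly in $k$ and $w$. Once this bookkeeping is done, $d_k^g\in\Gamma^g$ and $d^g\in\Delta^g$ both follow cleanly.
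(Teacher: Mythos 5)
Your construction is, in substance, the argument the paper is pointing at: the paper's ``proof'' of Theorem~\ref{thm:precursor} is only a citation to the time/space hierarchy theorem machinery (a relativizable, $\Delta^g$-uniform enumeration of clocked $\Gamma^g$-machines), and your enumerate-and-rectify scheme --- take the parameterization $\Gamma^g=\{f_k^g\}_k$ and normalize each slice into a martingale with value at most $1$ at $\lambda$, in a way that fixes the slices that already are such martingales --- is exactly the standard way that machinery is turned into the stated conclusion. Your rectification recurrence is sound: it manifestly enforces the averaging identity and nonnegativity, it is the identity on genuine normalized martingales, and your precision bookkeeping (inflating the query precision by $O(|w|)$ to absorb the error doubling in $x\mapsto 2a-x$) is the right way to keep both $d_k^g\in\Gamma^g$ and $d^g\in\Delta^g$.

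One loose end deserves attention. Your recurrence is written in terms of the \emph{exact} real values $f_k^g(u)$, but the slices of the parameterizing function are arbitrary string functions in $\Gamma^g$, and a given slice need not be a valid approximator of any real-valued function at all (its values at successive precisions may be inconsistent), in which case $d_k^g$ as you define it is not well defined. The usual repair is to define $d_k^g(w)$ directly from the approximator values $\hat f_k^g(v,\cdot)$ at a fixed canonical precision schedule, so that every $d_k^g$ is a well-defined, exactly computable martingale; the price is that the fixed-point property then holds only for \emph{exactly} computable martingales, so the literal set equality in the theorem statement should either be read for exactly computable martingales or be supplemented by the standard exact computation lemma (every $\Gamma^g$-computable martingale is dominated, success-set-wise, by an exactly $\Gamma^g$-computable one), which is all that Lemma~\ref{lem:pspub} actually uses. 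This is a detail the paper itself does not address, and it does not affect the correctness of your overall approach.
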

\begin{proof}
	This is implicit in the proofs of the time and space hierarchy theorems~\cite{HarSte65,HarLewSte65} (minus the ``disagreement'' step of the diagonalizations), together with the well-known fact that these proofs relativize.
\end{proof}

The following theorem is the main result of this section.
\begin{theorem}[Point-to-Set Principle for Resource-Bounded Dimensions]\label{thm:psprbd}
	If $\Gamma$ and $\Delta$ are resource bounds with $\Gamma<\Delta$, then, for all $X\subseteq\mathbf{C}$,
	\begin{equation}\label{eq:psprbd}
		\dim_\Delta(X)=\adjustlimits\min_{g\in\Delta}\sup_{A\in X}\dim_\Gamma^g(A).
	\end{equation}
\end{theorem}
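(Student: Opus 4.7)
The plan is to prove the two inequalities separately: the upper bound on $\dim_\Delta(X)$ uses Theorem~\ref{thm:precursor} to build a universal $s$-gale by summing rescaled versions of every $\Gamma^g$-computable martingale, while the reverse inequality is a routine encoding of witnessing $\Delta$-computable gales into a single oracle function in $\Delta$.

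For the upper direction, $\dim_\Delta(X)\leq\sup_{A\in X}\dim_\Gamma^g(A)$ for every $g\in\Delta$: Fix a rational $s$ strictly above the supremum. By Theorem~\ref{thm:precursor} the slices $\{d^g_k\}_{k\in\N}$ of a $\Delta^g$-computable $d^g$ are exactly the $\Gamma^g$-computable martingales with $d^g_k(\lambda)\leq 1$. By Observation~\ref{obs:diss} each $e^g_k(w)=2^{-(1-s)|w|}d^g_k(w)$ is a $\Gamma^g$-computable $s$-gale with $e^g_k(\lambda)\leq 1$. I set
\[
d(w)=\sum_{k\in\N}2^{-k-1}\,e^g_k(w),
\]
which by linearity of the $s$-gale equation is itself an $s$-gale with $d(\lambda)\leq 1$. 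Since each martingale starting from value at most $1$ satisfies $d^g_k(w)\leq 2^{|w|}$, one has $e^g_k(w)\leq 2^{s|w|}$, so truncating the sum at $O(r+s|w|)$ terms approximates $d(w)$ within $2^{-r}$. Thus $d$ is $\Delta^g$-computable, and since $g\in\Delta$ and each of our resource bounds is closed under composition with its own function oracles, $d$ is in fact $\Delta$-computable. For every $A\in X$, any $\Gamma^g$-computable $s$-gale succeeding on $A$ can be rescaled to have initial value $\leq 1$ and converted by Observation~\ref{obs:diss} to a $\Gamma^g$-computable martingale with initial value $\leq 1$; this martingale equals some $d^g_{k_A}$, so $e^g_{k_A}$ (and hence $d$) succeeds on $A$. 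Therefore $s\in\mathcal{G}_\Delta(X)$, and letting $s$ decrease to the supremum through rationals delivers the inequality.

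For the reverse direction and the attainment of the minimum, fix rationals $s_n\downarrow\dim_\Delta(X)$ and $\Delta$-computable $s_n$-gales $d_n$ succeeding on $X$ with approximators $\hat d_n\in\Delta$. I would define a single $g^{\ast}\in\Delta$ by placing $\hat d_n$ on its $n$\textsuperscript{th} slice, with the slice index padded enough that the individual $\Delta$-resource bounds of the $\hat d_n$'s are absorbed into one bound in $\Delta$. Each $d_n$ is then trivially $\Gamma^{g^{\ast}}$-computable (just query the oracle), so $\dim_\Gamma^{g^{\ast}}(A)\leq s_n$ for every $A\in X$ and every $n$, giving $\sup_{A\in X}\dim_\Gamma^{g^{\ast}}(A)\leq\dim_\Delta(X)$.

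The main obstacle is the resource bookkeeping in the upper direction: verifying that the universal $s$-gale $d$ is genuinely $\Delta$-computable rather than merely $\Delta^g$-computable, which rests on the closure of the $G_i$-hierarchy under composition with function oracles already inside the hierarchy. The analogous closure property is also what lets the padded slice encoding of $g^{\ast}$ land in $\Delta$, delivering a true minimum (rather than only an infimum) on the right-hand side of \eqref{eq:psprbd}.
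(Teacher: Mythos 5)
Your proposal follows the same route as the paper: the identical two-lemma decomposition, with the upper bound obtained by using Theorem~\ref{thm:precursor} to enumerate the $\Gamma^g$-computable martingales with initial value at most $1$, converting them to $s$-gales via Observation~\ref{obs:diss}, and summing with geometric weights; and the lower bound obtained by taking the oracle to be the $\Delta$-approximator $\hat d$ of a witnessing $\Delta$-computable gale. You are in fact more explicit than the paper on two points the paper glosses over: the $\Delta$-computability of the infinite sum (your truncation bound via $d^g_k(w)\leq 2^{|w|}$ is exactly the missing justification) and the rescaling of a succeeding gale to initial value $\leq 1$ before matching it to a slice. The one place where you diverge is the lower bound: the paper fixes a single rational $s>\dim_\Delta(X)$ and takes $g=\hat d$ for one $s$-gale, whereas you pack approximators $\hat d_n$ for a whole sequence $s_n\downarrow\dim_\Delta(X)$ into the slices of a single oracle $g^\ast$. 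This extra step is what is actually needed to realize the \emph{minimum} in~\eqref{eq:psprbd} rather than an infimum, but your justification that $g^\ast\in\Delta$ (``padded enough that the individual resource bounds are absorbed'') is the weak link: for time-bounded $\Delta$ such as $\up_i$, additive or even polynomial padding of the slice index does not tame a sequence $\hat d_n$ whose individual time bounds have unboundedly growing exponents, so one must argue that the $\hat d_n$ can be chosen $\Delta$-uniformly in $n$ (or otherwise arrange a single $\Delta$ bound covering all slices). Since the paper's own proof of Lemma~\ref{lem:psplb} works only one $s$ at a time, your version is, modulo that uniformity point, at least as complete as the original.
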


Theorem~\ref{thm:psprbd} follows immediately from the following two lemmas, which we prove separately.

\begin{lem}\label{lem:pspub}
	If $\Gamma$, $\Delta$, and $X$ are as in Theorem~\ref{thm:psprbd} and $g\in\Delta$, then
	\begin{equation}\label{eq:pspub}
		\dim_\Delta(X)\leq\sup_{A\in X}\dim_\Gamma^g(A).
	\end{equation}
\end{lem}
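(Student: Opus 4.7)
The plan is to show that every rational $s'$ strictly greater than $\sup_{A\in X}\dim_\Gamma^g(A)$ lies in $\mathcal{G}_\Delta(X)$; taking $s'$ down to that supremum then yields \eqref{eq:pspub}. The strategy is to assemble a single $\Delta$-computable $s'$-gale that succeeds on every $A\in X$, by forming an optimal weighted average of all $\Gamma^g$-computable $s'$-gales enumerated via Theorem~\ref{thm:precursor}. Fix such an $s'$. By definition of $\dim_\Gamma^g$, for each $A\in X$ some $\Gamma^g$-computable $s'$-gale succeeds on $A$, and after dividing by a rational upper bound on its value at $\lambda$, we may assume its value at $\lambda$ is at most $1$.

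Observation~\ref{obs:diss} shows that $d\mapsto d'$, where $d'(w)=2^{(1-s')|w|}d(w)$, is a bijection between $s'$-gales and martingales; since $s'$ is rational, this transformation and its inverse preserve $\Gamma^g$-computability and the property of having value at most $1$ at $\lambda$. Apply Theorem~\ref{thm:precursor} with oracle $g$ to obtain a $\Delta^g$-computable function $d^g$ whose slices $\{d_k^g\}_{k\in\N}$ enumerate all $\Gamma^g$-computable martingales $d_k^g$ with $d_k^g(\lambda)\leq 1$. Define
\[D^g(w)=\sum_{k=0}^{\infty}2^{-k-1}\,2^{(s'-1)|w|}d_k^g(w).\]
Each summand is an $s'$-gale, so $D^g$ is an $s'$-gale provided the sum converges; it does, with $D^g(w)\leq 2^{s'|w|}$, using the standard martingale bound $d_k^g(w)\leq 2^{|w|}$.

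To verify $\Delta^g$-computability of $D^g$ to within $2^{-r}$, I would truncate the sum at $k=r+\lceil s'|w|\rceil+O(1)$: the tail is bounded by $\sum_{k>N}2^{-k-1}\cdot 2^{s'|w|}=2^{s'|w|-N-1}$, which is at most $2^{-r}$ for such $N$, and the truncated sum is computable using $d^g$. Because $g\in\Delta$ and each of the resource bounds in our hierarchy is closed under composition with its own members, $\Delta^g$-computability of $D^g$ reduces to $\Delta$-computability. For success: if $A\in X$ and $d$ is the normalized $\Gamma^g$-computable $s'$-gale succeeding on $A$, then its associated martingale is some $d_k^g$, and $D^g(w)\geq 2^{-k-1}\,2^{(s'-1)|w|}d_k^g(w)=2^{-k-1}d(w)$ on prefixes of $A$, so $\limsup_{w\to A}D^g(w)=\infty$. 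Hence $s'\in\mathcal{G}_\Delta(X)$ and $\dim_\Delta(X)\leq s'$; letting $s'$ decrease to $\sup_{A\in X}\dim_\Gamma^g(A)$ gives \eqref{eq:pspub}.

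The main obstacle is confirming that the infinite sum $D^g$ is genuinely an $s'$-gale that is $\Delta$-computable rather than merely a pointwise supremum of approximants: this depends on the geometric decay of the tail, which in turn requires the universal martingale bound $d_k^g(w)\leq 2^{|w|}$ combined with the weights $2^{-k-1}$. A secondary subtlety is that Theorem~\ref{thm:precursor} delivers an enumeration of martingales, not $s'$-gales, so the rational-parameter gale-to-martingale bijection of Observation~\ref{obs:diss} is essential to reduce the $s'$-gale case to the universal martingale construction.
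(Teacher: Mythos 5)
Your proposal is correct and follows essentially the same route as the paper's proof: enumerate the $\Gamma^g$-computable martingales with initial value at most $1$ via Theorem~\ref{thm:precursor}, convert them to $s$-gales by the rescaling of Observation~\ref{obs:diss}, and combine them into a single $\Delta$-computable $s$-gale via a geometrically weighted sum that succeeds on every $A\in X$. The only difference is that you spell out two details the paper leaves implicit (normalizing each succeeding gale so its value at $\lambda$ is at most $1$, and the tail-truncation argument for $\Delta$-computability of the infinite sum), which is a welcome but not essential elaboration.
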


\begin{lem}\label{lem:psplb}
	If $\Gamma$, $\Delta$, and $X$ are as in Theorem~\ref{thm:psprbd}, then there exists $g\in\Delta$ such that, for all $A\in X$,
	\begin{equation}\label{eq:psplb}
		\dim_\Gamma^g(A)\leq\dim_\Delta(X).
	\end{equation}
\end{lem}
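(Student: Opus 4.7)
The plan is to witness $s := \dim_\Delta(X)$ by a countable family of $\Delta$-computable gales with parameters descending to $s$, package all their rational approximations into a single padded oracle $g \in \Delta$, and then recover each member of the family inside $\Gamma^g$ by a table lookup. First I would fix a computable sequence of rationals $s_k \searrow s$ with $s_k > s$ for every $k$. Because $s = \inf \mathcal{G}_\Delta(X)$, for each $k$ there is a $\Delta$-computable $s_k$-gale $d^{(k)}$ with $X \subseteq S^\infty[d^{(k)}]$, and by scaling we may assume $d^{(k)}(\lambda) \le 1$. Let $\hat d^{(k)} \in \Delta$ be the rational approximation witnessing $\Delta$-computability of $d^{(k)}$.

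Next I would fix a standard pairing $\langle\cdot,\cdot\rangle$ and choose a padding function $P(k,w,r)$ large enough that the $\Delta$-resources required by $\hat d^{(k)}$ on input $(w,r)$ are dominated by a single fixed $\Delta$-bound applied to the padded input length $n := P(k,w,r) + 1 + |\langle w,0^r\rangle|$. The oracle is then defined by
\[
  g\bigl(0^{P(k,w,r)}\,1\,\langle w,0^r\rangle\bigr) = \hat d^{(k)}(w,r)
\]
(encoded in binary), and $g(y) = \lambda$ on all other inputs. The parameterization $\Gamma = \{f_k \mid k\in\N\}$ with $f\in\Delta$ from the discussion just before Theorem~\ref{thm:precursor}, together with the universal-simulation ideas used in that theorem's proof, will guarantee that $P$ can be chosen so that a single $\Delta$-routine computes $g$ on the padded inputs, giving $g \in \Delta$.

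To finish, for each fixed $k$ I consider the $\Gamma^g$ procedure $\tilde d^{(k)}$ that, on input $(w,r)$, writes out the string $0^{P(k,w,r)}\,1\,\langle w,0^r\rangle$ and returns $g$'s response. For fixed $k$ the padding length $P(k,\cdot,\cdot)$ lies within the $\Gamma$-resource of $|w|+r$, so $\tilde d^{(k)} \in \Gamma^g$; hence $d^{(k)}$ is a $\Gamma^g$-computable $s_k$-gale succeeding on every $A \in X$. This gives $\dim_\Gamma^g(A) \le s_k$ for every $k$ and every $A \in X$, so in the limit $\dim_\Gamma^g(A) \le s = \dim_\Delta(X)$, as required.

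The hard part will be calibrating $P$ and verifying that $g$ is genuinely in $\Delta$: the $d^{(k)}$ were selected only individually, with no uniform bound on their $\Delta$-resource consumption, so a priori the spliced approximation is not in $\Delta$ at all. The padding must simultaneously be large enough that every $\hat d^{(k)}$ on the padded input fits inside one fixed $\Delta$-bound, and small enough (for each fixed $k$) that a $\Gamma$-machine with oracle $g$ can still write out the padded query on its work tape. The strict gap $\Gamma < \Delta$ is precisely what supplies room for both conditions, and carrying the bookkeeping out uniformly across the four varieties of resource bound used in the paper ($\up_i$, $\up_i\,\uspace$, $\comp$, $\all$) is the main technical chore.
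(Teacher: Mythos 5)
Your architecture is far heavier than what the paper uses, and the step you defer as ``the main technical chore'' is not a chore but an actual obstruction. The paper's proof is essentially one line: fix a single rational $s>\dim_\Delta(X)$, take one $\Delta$-computable $s$-gale $d$ with $X\subseteq S^\infty[d]$, and let $g:=\hat d$ be the rational-approximation function witnessing the $\Delta$-computability of $d$ as defined in Section~\ref{sec:rb}. A $\Gamma$-machine with function oracle $g$ computes $d$ to any desired precision by a single oracle query, so $d$ is a $\Gamma^g$-computable $s$-gale and $\dim_\Gamma^g(A)\le s$ for every $A\in X$. No enumeration of gales, no padding, and no appeal to Theorem~\ref{thm:precursor} is needed in this direction.

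The concrete gap in your version is the claim that $P$ can be calibrated so that $g\in\Delta$ while each $\tilde d^{(k)}$ is in $\Gamma^g$. Take $\Gamma=\up$ and $\Delta=\qp$. For $\tilde d^{(k)}$ to be a $\up^g$-computation it must physically write the padded query, so $P(k,w,r)$ must be polynomial in $|w|+r$ for each fixed $k$, and hence the padded length $N$ satisfies $\log N=O(\log(|w|+r))$. But a single $\qp$-machine computing $g$ runs in time $2^{(\log N)^c}$ for one \emph{fixed} $c$, i.e.\ $2^{O(\log(|w|+r))^c}$, whereas the $\hat d^{(k)}$ are only individually quasipolynomial, with exponents that may grow without bound in $k$. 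Because each $G_{i+1}$ is closed under composition with $G_i$, no $G_i$-bounded padding can launder an unbounded family of $G_{i+1}$ running times into one fixed $G_{i+1}$ bound: your two requirements on $P$ are contradictory, not merely delicate. (For $\Delta=\comp$ there is the separate problem that the sequence of machine indices for the $\hat d^{(k)}$, chosen by countably many applications of an existential statement, need not be computable, so $g$ need not be computable either.) The instinct driving your construction---wanting one oracle that works for every $s_k$ simultaneously, since the lemma asserts $\dim_\Gamma^g(A)\le\dim_\Delta(X)$ rather than $\le s$---is understandable, but the mechanism you propose does not deliver it; the paper's own argument simply fixes one rational $s$ and one gale.
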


\begin{proof}[Proof of Lemma~\ref{lem:pspub}]
	Let $\Gamma$, $\Delta$, $X$, and $g$ be as given, and let $s\in\Q$ satisfying
	\begin{equation}\label{eq:pspubpf1}
		s>\sup_{A\in X}\dim_\Gamma^g(A).
	\end{equation}
	It suffices to show that
	\begin{equation}\label{eq:pspubpf2}
		\dim_\Delta(X)\leq s.
	\end{equation}
	
	Since $\Gamma<\Delta$, Theorem~\ref{thm:precursor} tells us that there is a $\Delta^g$-computable function $d^g:\{0,1\}^*\to[0,\infty)$ such that the set $\{d_k^g\mid k\in\N\}$ of all slices of $d^g$ is the set of all martingales that are $\Gamma^g$-computable and satisfy $d_k^g(\lambda)\leq 1$. In fact, since $g\in\Delta$, this function $d^g$ is $\Delta$-computable. Define the function $d^{g,s}:\{0,1\}^*\to[0,\infty)$ so that
	\[d^{g,s}(0^k1x)=2^{(s-1)|x|}d^g(0^k1x)\]
	holds for all $k\in\N$ and $x\in\{0,1\}^*$. Then $d^{g,s}$ is $\Delta$-computable, and Observation~\ref{obs:diss} tells us that $\{d_k^{g,s}\mid k\in\N\}$ is the set of all $\Gamma^g$-computable $s$-gales that satisfy $d_k^{g,s}(\lambda)\leq 1$. Define $d:\{0,1\}^*\to[0,\infty)$ by
	\begin{equation}\label{eq:ddef}
		d=\sum_{k=0}^\infty 2^{-k}d_k^{g,s}.
	\end{equation}
	Then $d$ is a $\Delta$-computable $s$-gale, so to confirm~\eqref{eq:pspubpf2} it suffices to show that
	\begin{equation}\label{eq:pspubpf3}
		X\subseteq S^\infty[d].
	\end{equation}
	For this, let $A\in X$. Then, by~\eqref{eq:pspubpf1}, there is a $\Gamma^g$-computable $s$-gale $\tilde{d}$ such that $A\in S^\infty[\tilde{d}]$. Then there exists $k\in\N$ such that $d_k^{g,s}=\tilde{d}$, whence $A\in S^\infty[d_k^{g,s}]$. But then~\eqref{eq:ddef} tells us that
	\[\limsup_{w\to A}d(w)\geq 2^{-k}\limsup_{w\to A} d_k^{g,s}(w)=\infty,\]
	whence~\eqref{eq:pspubpf3} holds.
\end{proof}

\begin{proof}[Proof of Lemma~\ref{lem:psplb}]
	Let $\Gamma$, $\Delta$, and $X$ be as given, and let $s\in\Q$ satisfy
	\begin{equation}\label{eq:psplbpf1}
		s>\dim_\Delta(X).
	\end{equation}
	If suffices to exhibit $g\in\Delta$ such that, for all $A\in X$,
	\begin{equation}\label{eq:psplbpf2}
		\dim_\Gamma^g(A)\leq s.
	\end{equation}

	By~\eqref{eq:psplbpf1}, there is a $\Delta$-computable $s$-gale $d$ such that
	\begin{equation}\label{eq:psplbpf3}
		X\subseteq S^\infty[d].
	\end{equation}
	Let $g=\hat{d}\in\Delta$ testify to the $\Delta$-computability of $d$ as defined in Section~\ref{sec:rb}. Then $d$ is a $\Gamma^g$-computable $s$-gale, and~\eqref{eq:psplbpf3} tells us that, for all $A\in X$, $A\in S^\infty[d]$, whence~\eqref{eq:psplbpf2} holds.
\end{proof}

This completes the proof of Theorem~\ref{thm:psprbd}. We now discuss some of its instances.

We first address a small technical issue regarding relativization. Instances of the Point-to-Set Principle are usually stated in terms of oracles in $\mathbf{C}$ rather than in terms of function oracles as in Theorem~\ref{thm:psprbd}. These are equivalent for such large resource bounds as $\all$ and $\comp$, but some care is required for smaller resource bounds. For example, the case $\Gamma=\up$, $\Delta=\qp$ of Theorem~\ref{thm:psprbd} says that, for all $X\subseteq C$,
\begin{equation}\label{eq:pqp1}
	\dim_\qp(X)=\adjustlimits\min_{g\in\qp}\sup_{A\in X}\dim_\up^g(A).
\end{equation}
On the right-hand side, we would like to replace ``$g\in\qp$'' by ``$B\in R(\qp)$,'' i.e., ``$B\in\EXP$.'' However, this would \emph{not} be equivalent to~\eqref{eq:pqp1} and would in fact be false. The issue is that simulating an oracle query in the course of a computation of $d^B(w)$, where $d$ is $\up$-computable and $B\in\EXP$, could take $2^{|w|^k}$ time, which is \emph{not} within the $\qp$ resource bound on the left-hand side of~\eqref{eq:pqp1}. We thus introduce the special notation $\dim_\up^{\langle B\rangle}(A)$ for the $\up$-dimension of $A$ relative to $B\in\mathbf{C}$, with the proviso that a relativized $s$-gale $d^{\langle B\rangle}$ upper bounding $\dim_\up^{\langle B\rangle}$ is, inf computing $d^{\langle B\rangle}(w)$, only allowed to submit queries of length $O(\log |w|)$ to the oracle $B$.

With the above proviso, the instance~\eqref{eq:pqp1} of Theorem~\ref{thm:psprbd} says that, for all $X\subseteq \mathbf{C}$,
\begin{equation}\label{eq:pqp2}
	\dim_\qp(X)=\adjustlimits\min_{B\in\EXP}\sup_{A\in X}\dim_\up^{\langle B\rangle}(A).
\end{equation}
This implies that, for all $X\subseteq \mathbf{C}$,
\begin{equation}\label{eq:xinexp}
	\dim(X\mid\EXP)=\adjustlimits\min_{B\in\EXP}\sup_{A\in X\cap\EXP}\dim_\up^{\langle B\rangle}(A).
\end{equation}

The Point-to-Set Principle for Hausdorff dimension~\cite{LutLut17}, stated in the context of $\mathbf{C}$, says that, for all $X\subseteq\mathbf{C}$,
\begin{equation}\label{eq:psph}
	\dimH(X)=\adjustlimits\min_{B\in\mathbf{C}}\sup_{A\in X}\dim^{B}(A),
\end{equation}
thus characterizing the classical Hausdorff dimension of $X$ in terms of the relativized algorithmic dimensions of its individual elements. Since $\dim_\all=\dimH$, Theorem~\ref{thm:psprbd} tells us, for example, that we also have, for all $X\subseteq C$,
\begin{equation}\label{eq:psphp}
	\dimH(X)=\adjustlimits\min_{B\in\mathbf{C}}\sup_{A\in X}\dim_\up^{B}(A).
\end{equation}
Note that we could use $\dim_\up^{\langle B\rangle}(A)$ on the right-hand side here, but it is unnecessary, because the resource bound $\all$ on the left-hand side of~\eqref{eq:psphp} is unrestricted.

\section{Selectivity}\label{sec:sel}

\begin{defn}[\cite{Selm79}]
For any resource bound $\Delta$, a language $A\subseteq\{0,1\}^*$ is \emph{$\Delta$-selective} if there is a \emph{selector} function $f\in \Delta$ such that, for all pairs $a,b\in\{0,1\}^*$, we have $f(\langle a,b\rangle)\in\{a,b\}$ and
\[a\in A\ \text{or}\ b\in A\implies f(\langle a,b\rangle)\in A,\]
where $\langle\cdot,\cdot\rangle:\{0,1\}^*\times\{0,1\}^*\to\{0,1\}^*$ is a standard pairing function.
\end{defn}

\begin{theorem}\label{thm:pseldim}
	If $A,B\in\mathbf{C}$ and $g:\{0,1\}^*\to\{0,1\}^*$ are such that $B$ is $\up^g$-selective and $A\leq_m^P B$, then $\dim_\up^g(A)=0$.
\end{theorem}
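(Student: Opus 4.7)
The plan is to prove, for each rational $s > 0$, that there is a $\up^g$-computable $s$-gale $d$ with $A \in S^\infty[d]$; by the definition of $\up^g$-dimension this yields $\dim_\up^g(A) \leq s$, and letting $s \to 0^+$ gives $\dim_\up^g(A) = 0$. The whole construction rests on the observation that $\up^g$-selectivity of $B$, composed with the polynomial-time reduction $A \leq_m^P B$, forces every initial segment $A \uhr N$ into a short, efficiently enumerable list of at most $N + 1$ candidates.

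More precisely, let $h \in \up$ witness $A \leq_m^P B$, let $f \in \up^g$ be a selector for $B$, and for each $N$ set $Y_N = \{h(s_0), \ldots, h(s_{N-1})\}$, equipped with the tournament in which $y$ beats $y'$ iff $f(\langle y, y'\rangle) = y$. The selector condition forces every $y \in B \cap Y_N$ to beat every $y' \in Y_N \setminus B$, so in any Hamiltonian path of this tournament one can never pass from $Y_N \setminus B$ back into $B \cap Y_N$; hence $B \cap Y_N$ is an initial segment of the path. I then exhibit a Hamiltonian path by insertion sort (inserting each new vertex at the unique position where the preceding vertex beats it and it beats the succeeding vertex), which uses $O(|Y_N|^2)$ queries to $f$ and so runs in $\text{poly}(N)$ time with oracle $g$. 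Enumerating the $|Y_N| + 1$ possible cutoffs along this path and translating them back through $h$ produces a $\up^g$-computable list $F_N$ of at most $N + 1$ length-$N$ strings that is guaranteed to contain $A \uhr N$.

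With $F_N$ in hand I construct $d$ along dyadic blocks $B_k = [L_k, L_{k+1})$ with $L_k = 2^k$. Define a sub-probability measure $\mu$ on $\mathbf{C}$ by specifying, conditioned on each prefix $w_k$ of length $L_k$, the uniform distribution on block $k$ over the suffixes realized by $\{\tau \in F_{L_{k+1}} : w_k \sqsubseteq \tau\}$ (falling back to uniform over all suffixes if this set is empty). The induced martingale $d'(w) = 2^{|w|}\mu([w])$ converts via Observation~\ref{obs:diss} to the $s$-gale $d(w) = 2^{(s-1)|w|}d'(w)$. Since $A \uhr L_{k+1} \in F_{L_{k+1}}$ extends $A \uhr L_k$ for every $k$, each block contributes a multiplicative factor of at least $1/|F_{L_{k+1}}| \geq 1/(L_{k+1} + 1)$ to $\mu([A \uhr L_K])$, yielding
\[d(A \uhr L_K) \;\geq\; 2^{sL_K} \prod_{k < K} \frac{1}{L_{k+1} + 1} \;\geq\; 2^{s \cdot 2^K - O(K^2)},\]
which tends to infinity with $K$, so $A \in S^\infty[d]$.

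The remaining task is to confirm that $d$ is $\up^g$-computable. For a given $w$ only the $O(\log |w|)$ blocks with $L_{k+1} \leq 2|w|$ are relevant; each $F_{L_{k+1}}$ is built in $\text{poly}(|w|)$ time by the insertion-sort procedure above, and the block factors multiply into an approximation $\hat d(w, r)$ of $d(w)$ within $2^{-r}$ in total time $\text{poly}(|w| + r)$. The step I expect to require the most care is the Hamiltonian-path lemma, making explicit why the constructive insertion sort really yields an ordering with $B \cap Y_N$ as a prefix, together with the numerical check that the dyadic schedule keeps $\prod_{k < K}|F_{L_{k+1}}| \leq 2^{O(K^2)}$ small enough that the gain $2^{s \cdot 2^K}$ dominates for every $s > 0$.
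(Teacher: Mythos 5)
Your proposal is correct, and it rests on the same combinatorial core as the paper's proof: the selector $f$, applied to the $h$-images of the strings being decided, induces a tournament in which every element of $B$ beats every element of $\overline{B}$, so $A$'s membership pattern is an up-set (equivalently, $B\cap Y_N$ is a prefix of a Hamiltonian path) of a polynomial-time-computable order, leaving only $N+1$ candidates. Where you diverge is in converting this into a gale. The paper works with blocks of \emph{constant} size $k$ chosen so that $2^{ks}/(k+1)>1$, sorts each block locally, and takes a uniform average of the $k+1$ ``threshold'' strategies, re-aggregating after every block; the per-block gain is then immediate and the time analysis is trivial because each block involves only $O(1)$ strings of length $O(\log|w|)$. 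You instead sort \emph{all} $N$ images globally, extract the full candidate list $F_N$ of initial segments, and bet via a conditional measure over doubling blocks, winning $2^{2^K-O(K^2)}$ by stage $2^K$. Your route is essentially the general principle ``polynomially many efficiently enumerable candidate initial segments implies $\up$-dimension $0$,'' which is more modular and handles all $s>0$ with one underlying measure, at the cost of heavier bookkeeping ($\text{poly}(|w|)$ selector calls per evaluation rather than $O(1)$, and the $2^{O(K^2)}$ loss estimate). Two small points to tidy: fix an orientation convention for the tournament (e.g., query $f$ on the lexicographically ordered pair) so that ``beats'' is total and antisymmetric --- the key property that members of $B$ beat non-members survives either orientation; and note that your uniform conditional distribution over realized suffixes makes $d'$ an exact martingale, so Observation~3.2 applies without needing a supergale detour.
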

\begin{proof}
	Let $A$, $B$, and $g$ be as in the theorem statement. Let $f\in\up^g$ be a selector for $A$, let $h:\{0,1\}^*\to\{0,1\}^*$ be a $\leq_m^P$-reduction from $A$ to $B$, and let $s>0$. We will show that $\dim_\up^g(A)\leq s$ by constructing an $s$-gale that succeeds on $A$ and is computable in polynomial time relative to $g$.

	Let $k\in\N$ be sufficiently large so that
	\begin{equation}\label{eq:choiceofk}
	  \frac{2^{ks}}{k+1}>1.
	\end{equation}
	We will consider blocks of $k$ consecutive strings. For each $q\in\N$, define the directed graph $G_q$ whose vertex set is $\{0,\ldots,k-1\}$ and edge set is
	\[\left\{(i,j)\mid f(\langle h(s_{q k+i}),h(s_{q k+j})\rangle)=h(s_{q k+j})\right\}.\]
	Notice that if $s_{q k+i}\in A$ and $s_{q k+j}\not\in A$, then $h(s_{q k+i})\in B$ and $h(s_{q k+j})\not\in B$. In this situation, the edge $(i,j)$ cannot be present in $G_q$, and more generally there cannot be any path from $i$ to $j$ in $G_q$.

	Let $G'_q$ be the directed acyclic graph obtained by contracting each strongly connected component of $G_q$ to a single vertex. Define a linear order $\prec_q$ on $\{0,\ldots,k-1\}$ by topologically sorting $G'_q$, breaking ties within each strongly connected component arbitrarily. In this order, $i\preceq_q j$ implies that there is a path from $i$ to $j$ in $G_q$.

	Thus, if $i\preceq_q j$ and $s_{q k+i}\in A$, then $s_{q k+j}\in A$. Extending $\prec_q$ by defining $i\prec_q k$ for all $i\in\{0,\ldots,k-1\}$, it follows that
	\begin{equation}\label{eq:threshold}
		A\cap\{s_{q k},\ldots,s_{q k+k-1}\}=\{s_{q k+j}\mid i\preceq_q j\}
	\end{equation}
	for some $i\in\{0,\ldots,k\}$.

	Define $d:\{0,1\}^*\to[0,\infty)$ and, for each $i\in\{0,\ldots,k\}$, $d_i:\{0,1\}^*\to[0,\infty)$ recursively as follows. For $w\in\{0,1\}^*$, let $qk+j=|w|$, where $j\in\{0,\ldots,k-1\}$.
	\begin{itemize}
		\item For all $i\in\{0,\ldots,k\}$, $d_i(\lambda)=d(\lambda)=1$.
		\item $d(w)=\frac{1}{k+1}\sum_{i=0}^k d_i(w)$.
		\item For all $i\in\{0,\ldots,k\}$ and $j=0$,
		\begin{align*}
			d_i(w0)&=
			\begin{cases}
				0&\text{if}\ i\preceq_q j\\
				2^{s}d(w)&\text{otherwise},
			\end{cases}\\
			d_i(w1)&=
			\begin{cases}
				2^{s}d(w)&\text{if}\ i\preceq_q j\\
				0&\text{otherwise},
			\end{cases}
		\end{align*}
		\item For all $i\in\{0,\ldots,k\}$ and $j\in \{1,\ldots,k-1\}$,
		\begin{align*}
			d_i(w0)&=
			\begin{cases}
				0&\text{if}\ i\preceq_q j\\
				2^{s}d_i(w)&\text{otherwise},
			\end{cases}\\
			d_i(w1)&=
			\begin{cases}
				2^{s}d_i(w)&\text{if}\ i\preceq_q j\\
				0&\text{otherwise}.
			\end{cases}
		\end{align*}
	\end{itemize}
	Informally, each $d_i$ represents a betting strategy, and $d$ is an aggregate betting strategy that evenly re-allocates between the $d_i$ after each block of $k$ bits. Observe that $d$ is an $s$-gale, although the individual $d_i$ are not.

	Now consider $d(A\upharpoonright n)$. If $n=0$, then $d(A\upharpoonright n)=1$. Otherwise, $n=qk+j$ for some $q\in\N$ and $j\in\{1,\ldots,k\}$. Let $i\in\{0,\ldots,k\}$ be the value satisfying equation~\eqref{eq:threshold} for this $q$. Then
	\begin{align*}
		d(X\upharpoonright n)&\geq \frac{d_i(X\upharpoonright n)}{k+1}\\
		&=\frac{2^{js}d(X\upharpoonright n-j)}{k+1}\\
		&=\frac{2^{(qk+j)s}}{(k+1)^q}\\
		&>\left(\frac{2^{ks}}{k+1}\right)^q.
	\end{align*}
	By inequality~\eqref{eq:choiceofk}, this lower bound is monotonically increasing and unbounded, so \[\liminf_{n\to\infty}d(A\upharpoonright n)=\infty.\] Therefore the $s$-gale $d$ succeeds on $A$. Furthermore, for all $w\in\{0,1\}^*$, the value $d(w)$ can be computed in polynomial time relative to $g$ by:
	\begin{itemize}
		\item $k$ calls to the polynomial-time reduction function $h$ on inputs \[s_{qk},\ldots,s_{qk+k-1},\]
		each of which has length $O(\log |w|)$;
		\item $k^2$ calls, for each ordered pairs from $\{s_{qk},\ldots,s_{qk+k-1}\}$, to the selector function $f$, which runs in polynomial time relative to $g$; and
		\item standard graph algorithms on $G_q$, which has $k=O(1)$ vertices.
	\end{itemize}
	We conclude that $\dim_\up^g(A)<s$, and the theorem follows immediately. 
\end{proof}

\begin{lem}\label{lem:h}
    Let $\qp'$ be the set of all functions in $\qp$ whose output length is polynomially bounded. There is a function $h\in \qp'$ such that $\qp'=\up^h$.
\end{lem}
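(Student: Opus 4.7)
The plan is to construct $h \in \qp'$ as a universal oracle for the class $\qp'$, using the time-hierarchy-style parametrization implicit in Theorem~\ref{thm:precursor}, and then verify the two set inclusions separately.

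For the inclusion $\up^h \subseteq \qp'$, I would argue directly: a polynomial-time oracle machine with oracle $h \in \qp'$ makes only polynomially many queries of polynomial length, each answered by $h$ in quasipolynomial time and with polynomially bounded output. Composing these answers with the machine's polynomially many oracle-free steps keeps the total runtime within quasipolynomial and keeps the output length polynomial, so the resulting function lies in $\qp'$.

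For the inclusion $\qp' \subseteq \up^h$, I would fix an effective enumeration $(M_k)_{k \in \N}$ of clocked Turing machines that exhausts $\qp'$, with the clock of $M_k$ set to a canonical quasipolynomial time bound and the output truncated to a polynomial length. Define $h$ on slice inputs by $h(0^k 1 x) = M_k(x)$. For any $f \in \qp'$ there is a fixed index $k_f$ with $f = M_{k_f}$, and the polynomial-time oracle machine for $f$ on input $x$ simply writes $0^{k_f} 1 x$ on its oracle tape and returns the oracle's answer, running in polynomial time because $k_f$ is a constant independent of $x$.

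The main obstacle is ensuring $h$ itself belongs to $\qp'$. Because the quasipolynomial degrees of the individual $M_k$ are unbounded in $k$, the enumeration and the length encoding must be arranged so that the leading block of zeros in the slice index provides enough padding for the simulation of $M_k$ on $x$ to fit inside a uniform quasipolynomial-in-input-length budget for $h$, with output truncated to a uniform polynomial length. This is precisely the universal-simulation half of the time hierarchy theorem, the same ingredient that underlies Theorem~\ref{thm:precursor}, so essentially the same construction works here for a universal function in place of a universal martingale.
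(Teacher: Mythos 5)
Your construction is the same as the paper's: enumerate clocked Turing machines $M_0,M_1,\ldots$ whose computed functions exhaust $\qp'$ and set $h(0^k1x)=M_k(x)$. Your two inclusions are argued correctly as far as they go, and you have correctly isolated the crux, namely whether $h$ itself lies in $\qp'$ --- a point the paper's proof passes over with ``it is clear that.'' But your proposed resolution of that crux does not work. The padding $0^k$ only helps when $k$ is large compared to $|x|$; for a \emph{fixed} slice $k$ and $|x|\to\infty$ it adds an additive constant to the input length, so the time budget $2^{(\log|0^k1x|)^{c_0}}$ available to a quasipolynomial-time computation of $h$ is asymptotically $2^{(\log|x|)^{c_0}}$ for one fixed exponent $c_0$. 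For the slices to exhaust $\qp'$, however, the clocks $T_k$ must eventually dominate $2^{(\log m)^{c}}$ for \emph{every} $c$, so some slice requires time $2^{(\log|x|)^{c}}$ with $c>c_0$, exceeding the budget. This is exactly where the analogy with Theorem~\ref{thm:precursor} breaks down: there the universal simulator lives in $\Delta$, strictly above the simulated class $\Gamma$ (a quasipolynomial budget covers all polynomial clocks, since $|x|^{e_k}=2^{e_k\log|x|}\leq 2^{(\log n)^2}$ once $e_k\leq\log|x|$), whereas here you are asking $\qp$ to universally simulate itself.

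The gap is not repairable by a cleverer encoding, because the statement fails for \emph{every} $h\in\qp'$. If $h$ is computable in time $2^{(\log n)^{c_0}}$ with polynomially bounded output, then a polynomial-time oracle machine with oracle $h$ makes polynomially many queries of length $\mathrm{poly}(m)$, each answerable by direct computation in time $2^{O((\log m)^{c_0})}$; hence $\up^h\subseteq\bigcup_{C}\TIME\bigl(2^{C(\log m)^{c_0}}\bigr)$, and by the time hierarchy theorem there is a $\{0,1\}$-valued function in $\TIME\bigl(2^{(\log m)^{c_0+2}}\bigr)$ outside that union, which lies in $\qp'$ but not in $\up^h$. (Your own first inclusion, carried out quantitatively, is precisely what forces this.) Any correct version of the lemma must weaken the claim $\qp'=\up^h$ --- for instance by exploiting the fact that in Theorem~\ref{thm:pseldim} the selector is only ever queried on inputs of length polylogarithmic in $|w|$, in the spirit of the restricted relativization $\dim_\up^{\langle B\rangle}$ introduced in Section~\ref{sec:psp}.
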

\begin{proof}
    By standard techniques of clocking Turing machines and bounding their running times and output lengths, we can form an enumeration $M_0,M_1,M_2,\ldots$ of Turing machines such that $\qp'$ is exactly the set of functions computed by Turing machines in this list. Define $h:\{0,1\}^*\to\{0,1\}^*$ by
    \[
        h(u)=
        \begin{cases}
            M_k(x)&\text{if}\ u=0^k1x\\
            \lambda&\text{if $u$ does not contain a 1.}
        \end{cases}
    \]
    It is clear that $\up^h=\qp'$.
\end{proof}

\begin{theorem}
    $\dim(P_m(\qpSEL)\mid\EXP)=0$.
\end{theorem}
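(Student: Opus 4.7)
The plan is to derive this as a straightforward consequence of Theorem~\ref{thm:pseldim}, Lemma~\ref{lem:h}, and the Point-to-Set Principle (Theorem~\ref{thm:psprbd}) specialized to $\Gamma = \up$ and $\Delta = \qp$. Since $\dim(X \mid \EXP) = \dim_\qp(X \cap \EXP)$ by definition, and Theorem~\ref{thm:psprbd} tells us that
\[
    \dim_\qp(P_m(\qpSEL) \cap \EXP) = \adjustlimits\min_{g \in \qp} \sup_{A \in P_m(\qpSEL) \cap \EXP} \dim_\up^g(A),
\]
it suffices to exhibit a single function oracle $g \in \qp$ relative to which every $A \in P_m(\qpSEL)$ has $\up^g$-dimension $0$.

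First I would take $g = h$, where $h \in \qp'$ is the universal function supplied by Lemma~\ref{lem:h}, so that $\up^h = \qp'$. Then, given any $A \in P_m(\qpSEL)$, I fix $H \in \qpSEL$ with $A \leq^P_m H$ together with a $\qp$-selector $f$ for $H$. Because any selector outputs one of its two inputs, the output of $f$ is bounded in length by the input, so $f \in \qp'$. Lemma~\ref{lem:h} then places $f$ in $\up^h$, which means $H$ is $\up^h$-selective. Now Theorem~\ref{thm:pseldim} (applied with this $g = h$ and this $B = H$) yields $\dim_\up^h(A) = 0$.

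Since $A$ was an arbitrary element of $P_m(\qpSEL)$, the supremum in the Point-to-Set formula collapses to $0$, giving $\dim_\qp(P_m(\qpSEL) \cap \EXP) = 0$, which is exactly $\dim(P_m(\qpSEL) \mid \EXP) = 0$.

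The one conceptual step worth flagging, and the one that really motivates the statement of Lemma~\ref{lem:h}, is that Theorem~\ref{thm:psprbd} demands a \emph{single} $\qp$-function oracle that works uniformly for all of the (possibly continuum-many) languages $A \in P_m(\qpSEL)$; the oracle cannot be chosen after fixing $A$. Lemma~\ref{lem:h} defuses this obstacle by packaging all functions in $\qp'$ into one function $h$, so that whichever $\qp$-selector is needed for a given $A$ automatically becomes $\up$-computable relative to the fixed oracle $h$. Once this uniform reduction to $\up^h$-selectivity is in hand, Theorem~\ref{thm:pseldim} supplies the pointwise dimension bound immediately and no further work is required.
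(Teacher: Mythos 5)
Your proof is correct and follows essentially the same route as the paper's: fix the universal oracle $h$ from Lemma~\ref{lem:h}, observe that every $\qp$-selector lies in $\qp'=\up^h$ (so the selected language is $\up^h$-selective), apply Theorem~\ref{thm:pseldim} pointwise, and conclude via the upper-bound direction of Theorem~\ref{thm:psprbd}. Your explicit remark that a selector's output is one of its inputs, hence length-bounded and therefore in $\qp'$, is a small justification the paper leaves implicit, but the argument is otherwise identical.
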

\begin{proof}
    Let $h$ be as in Lemma~\ref{lem:h}, and let $A\in P_m(\qpSEL)$. Then there exists some language $B\in\mathbf{C}$ and function $g\in \qp'=\up^h$ such that $A\leq_m^P B$ and $g$ is a selector for $B$, i.e., $B$ is $\up^h$-selective. By Theorem~\ref{thm:pseldim}, then, $\dim_\up^h(A)=0$. This holds for all $A\in P_m(\qpSEL)$, so we can apply Theorem~\ref{thm:psprbd}:
    \begin{align*}
    \dim_\qp(P_m(\qpSEL))
        &\leq\sup_{A\in P_m(\qpSEL)}\dim_\up^h(A)\\
        &=0.
    \end{align*}
    Since $\dim(P_m(\qpSEL)\mid\EXP)$ is defined as \[\dim_\qp(P_m(\qpSEL)\cap\EXP)\leq\dim_\qp(P_m(\qpSEL)),\]
    this completes the proof.
    
\end{proof}

\begin{cor}
    No $\qp$-selective set is partially $\leq_m^P$-hard for $\EXP$.
\end{cor}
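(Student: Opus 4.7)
The plan is to derive this corollary directly from the immediately preceding theorem by monotonicity of dimension. Unpacking the definition, $H$ is partially $\leq_m^P$-hard for $\EXP$ exactly when $\dim(P_m(H)\mid \EXP)>0$, so what must be shown is that every $\qp$-selective $H$ satisfies $\dim(P_m(H)\mid \EXP)=0$.

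First I would observe that if $H\in\qpSEL$ then $P_m(H)\subseteq P_m(\qpSEL)=\bigcup_{H'\in\qpSEL}P_m(H')$, simply because the union on the right ranges over all $\qp$-selective languages, $H$ among them. Intersecting with $\EXP$ preserves inclusion, giving $P_m(H)\cap\EXP\subseteq P_m(\qpSEL)\cap\EXP$. Next I would invoke monotonicity of $\dim_\qp$ together with the definition $\dim(X\mid\EXP)=\dim_\qp(X\cap\EXP)$ to conclude
\[
\dim(P_m(H)\mid\EXP)=\dim_\qp(P_m(H)\cap\EXP)\leq\dim_\qp(P_m(\qpSEL)\cap\EXP)=\dim(P_m(\qpSEL)\mid\EXP).
\]
The previous theorem states that this last quantity equals $0$, so $\dim(P_m(H)\mid\EXP)=0$, which by the definition of partial hardness is exactly the claim.

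There is essentially no obstacle here: the work was done in the preceding theorem (which combined Theorem~\ref{thm:pseldim} with the resource-bounded Point-to-Set Principle via the universal selector-parameterizing function $h$ from Lemma~\ref{lem:h}). The only thing worth double-checking is that the definitional identity $\dim(X\mid\EXP)=\dim_\qp(X\cap\EXP)$ and the monotonicity of $\dim_\qp$ are being applied as stated in Section~3; both are immediate from the definitions, since any $\qp$-computable $s$-gale that succeeds on the larger set also succeeds on every subset.
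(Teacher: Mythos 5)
Your proof is correct and follows exactly the route the paper intends: the corollary is an immediate consequence of the preceding theorem via the inclusion $P_m(H)\subseteq P_m(\qpSEL)$ for $H\in\qpSEL$, monotonicity of $\dim_\qp$, and the definition of partial hardness. The paper leaves this unpacking implicit, and your filling-in is accurate.
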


\begin{cor}
	If $\dim(\NP\mid \EXP)>0$, then no $\qp$-selective set is $\leq^P_m$-hard for $\NP$.
\end{cor}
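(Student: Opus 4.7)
The plan is to derive this corollary directly from the preceding theorem that $\dim(P_m(\qpSEL)\mid\EXP)=0$, using the contrapositive together with monotonicity of resource-bounded dimension.

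First I would assume, for contradiction, that there exists a $\qp$-selective set $H \in \qpSEL$ that is $\leq^P_m$-hard for $\NP$. By the definition of $\leq^P_m$-hardness, every language $A \in \NP$ satisfies $A \leq^P_m H$, so $\NP \subseteq P_m(H)$. Since $H \in \qpSEL$, we have $P_m(H) \subseteq P_m(\qpSEL)$, and therefore $\NP \subseteq P_m(\qpSEL)$.

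Next I would intersect with $\EXP$. Using the fact that $\NP \subseteq \EXP$, we obtain $\NP = \NP \cap \EXP \subseteq P_m(\qpSEL) \cap \EXP$. Applying monotonicity of $\dim_\qp$ to this inclusion yields
\[
\dim(\NP \mid \EXP) = \dim_\qp(\NP \cap \EXP) \leq \dim_\qp(P_m(\qpSEL) \cap \EXP) = \dim(P_m(\qpSEL) \mid \EXP).
\]
By the preceding theorem the right-hand side equals $0$, so $\dim(\NP \mid \EXP) = 0$, contradicting the hypothesis $\dim(\NP \mid \EXP) > 0$.

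There is no real obstacle here: the corollary is an immediate consequence of the previous theorem via the contrapositive, and the only ingredients are the definition of $\leq^P_m$-hardness (to get the set inclusion $\NP \subseteq P_m(\qpSEL)$) and monotonicity of $\dim_\qp$ on subsets of $\EXP$ (to transport the dimension bound). The substantive work has already been done in establishing Theorem on $\dim(P_m(\qpSEL)\mid\EXP)=0$.
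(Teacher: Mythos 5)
Your proof is correct and is exactly the intended argument: the paper states this corollary without proof as an immediate consequence of $\dim(P_m(\qpSEL)\mid\EXP)=0$, and your route via the inclusion $\NP\subseteq P_m(H)\subseteq P_m(\qpSEL)$ and monotonicity of $\dim_\qp$ is the standard way to make that implication explicit. No gaps.
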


\section{Disjoint $\NP$ Pairs}\label{sec:disj}
In this section we improve the results in \cite{FoLuMa12}\ by proving that the dimension of $\disjNP$ in $\disjEXP$ is related to the dimension of NP inside EXP.

\begin{defn}[\cite{DSRSSSF,Lutz:DFRD}]
  For $s\in[0,\infty)$ and distribution $\beta$ on alphabet $\Sigma$, a \emph{$\beta$-$s$-gale} is a function $d:\Sigma^*\to[0,\infty)$ such that, for all $w\in\Sigma^*$,
  \[d(w)=\sum_{a\in\Sigma}d(wa)\beta(a)^s.\]
  A $\beta$-$s$-gale succeeds on a language $A\subseteq\Sigma^*$, and we write $A\in S^\infty[d]$, if
  \[\limsup_{w\to A}d(w)=\infty.\]
  Let $\Delta$ be a resource bound, $\beta$ a distribution on alphabet $\Sigma$, and $X\subseteq\mathcal{P}(\Sigma^*)$. Then $\mathcal{G}_{\Delta,\beta}(X)$ denotes the set of all $s\in[0,\infty)$ such that there is a $\Delta$-computable $\beta$-$s$-gale $d$ for which $X\subseteq S^\infty[d]$, and the \emph{$\Delta$-$\beta$-dimension} of $X$ is
  \[\dim_{\Delta,\beta}(X)=\inf\mathcal{G}_{\Delta,\beta}(X).\]
\end{defn}

We code disjoint pairs as in \cite{FoLuMa12}, using the alphabet \trialph.
For a pair $(A, B)$, $1$ corresponds to $A$, $-1$ to $B$, and $0$ to
$(A\cup B)^c$.

We fix a probability distribution $\gamma_0$ on \trialph\ as
$\gamma_0(0)=1/4$, $\gamma_0(1)=\gamma_0(-1)=3/8$, that is the
natural distribution used in \cite{FoLuMa12}. For disjoint pairs we write $\dimdelta(X)$ for $\dim_{\Delta,\gamma_0}(X)$. Theorem~\ref{thm:psprbd} extends routinely to this setting.

The main theorem of this section is the following

\begin{theorem}\label{the1}

If $\dim(\disjNP\mid\disjEXP)=1$, then $\dim(\NP\mid\EXP)>0$.

\end{theorem}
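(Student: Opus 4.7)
The plan is to argue by contrapositive: assume $\dim(\NP \mid \EXP) = 0$ and produce a constant $s < 1$ such that every disjoint $\NP$ pair admits a low-resource $\gamma_0$-$s$-gale. First, apply Theorem~\ref{thm:psprbd} with $\Gamma = \up$, $\Delta = \qp$ to the hypothesis to obtain an oracle $g \in \qp$ satisfying $\dim_\up^g(A) = 0$ for every $A \in \NP$. By the extension of Theorem~\ref{thm:psprbd} to disjoint pairs (noted just above the theorem statement), it then suffices to exhibit, for each $(A,B) \in \disjNP$, a $\up^g$-computable $\gamma_0$-$s$-gale $\tilde d$ that succeeds on the characteristic sequence $\chi_{(A,B)}$ of $(A,B)$ over $\{0,1,-1\}$.

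The key idea is the factorization $\gamma_0 = \gamma_1 \otimes \gamma_2$, where $\gamma_1 = (1/4, 3/4)$ is the binary distribution on zero vs.\ nonzero and $\gamma_2 = (1/2, 1/2)$ is uniform on $\{1,-1\}$. Writing $\pi_1(w)$ for the binary string recording which positions of $w$ are nonzero and $\pi_2(w)$ for the subword of nonzero symbols of $w$, a direct calculation from the gale equations shows that $\tilde d(w) = d_1(\pi_1(w))\, d_2(\pi_2(w))$ is a $\gamma_0$-$s$-gale whenever $d_1$ is a $\gamma_1$-$s$-gale on $\{0,1\}^*$ and $d_2$ is a $\gamma_2$-$s$-gale on $\{1,-1\}^*$. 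We will build $d_1$ from the hypothesis applied to $A \cup B \in \NP$, and take $d_2$ to be the flat-betting gale $d_2(v) = 2^{(s-1)|v|}$, which one checks directly is a $\gamma_2$-$s$-gale.

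Concretely, for any $\epsilon > 0$ the oracle $g$ supplies a $\up^g$-computable binary $\epsilon$-gale $d$ succeeding on $\chi_{A \cup B}$. Twist $d$ into a $\gamma_1$-$s$-gale by setting
\[
  d_1(u) = d(u) \cdot 2^{(2s - \epsilon)|u|_0} (4/3)^{s |u|_1} 2^{-\epsilon |u|_1};
\]
the multipliers are chosen so that $(1/4)^s \cdot 2^{2s - \epsilon} = 2^{-\epsilon}$ and $(3/4)^s \cdot (4/3)^s \cdot 2^{-\epsilon} = 2^{-\epsilon}$, which together with the $\epsilon$-gale identity $d(u0) + d(u1) = 2^\epsilon d(u)$ yields the $\gamma_1$-$s$-gale identity for $d_1$. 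Letting $m_n = |(A \cup B) \cap \{s_0, \ldots, s_{n-1}\}|$, a short computation then gives
\[
  \log \tilde d(\chi_{(A,B)} \uhr n) = \log d(\chi_{A \cup B} \uhr n) + n(2s - \epsilon) - m_n \bigl(s \log(3/2) + 1\bigr).
\]
As a function of the density $\beta = m_n/n \in [0,1]$, the coefficient of $n$ in this expression is minimized at $\beta = 1$, where it equals $s(3 - \log 3) - 1 - \epsilon$. Since $3 - \log 3 \approx 1.585 > 1$, we may fix any $\epsilon \in (0,\, 2 - \log 3)$ and any $s \in \bigl((1+\epsilon)/(3 - \log 3),\, 1\bigr)$ to make this strictly positive, driving $\log \tilde d \to \infty$ uniformly in $(A,B)$.

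The hardest step is verifying the gale identity for the product $\tilde d$ and, critically, that the positive linear growth rate survives the worst-case density $m_n = n$: this is where the numerical inequality $3 - \log 3 > 1$ does the real work, providing the slack needed to push $s$ strictly below $1$. Naive alternatives, such as lifting a single $\NP$-martingale on $A$, on $B$, or on an $\NP$-encoded interleaving of $(A,B)$, fail to satisfy the $\gamma_0$-$s$-gale equation because the impossible outcome $s_n \in A \cap B$ introduces a residual term in the recursion; the decomposition $\gamma_0 = \gamma_1 \otimes \gamma_2$ is the clean workaround.
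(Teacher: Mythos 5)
Your proof is correct, and it follows the same strategic skeleton as the paper's (contrapositive; Theorem~\ref{thm:psprbd} to extract a single oracle $g\in\qp$ relative to which every $\NP$ language has $\up$-dimension $0$; then a gale on the ternary alphabet that bets on the membership pattern of $A\cup B$ while splitting evenly between $1$ and $-1$; then Theorem~\ref{thm:psprbd} again, in its extension to $\gamma_0$-gales, to conclude $\dim_\qp(\disjNP)<1$). Where you genuinely diverge is in how the ternary gale is built and verified. The paper factors the construction through two reusable lemmas: Lemma~\ref{lem4} reweights a binary $s$-gale (for small $s$) into a $\beta$-$t$-gale with $t<1$ for $\beta=(1/4,3/4)$, and Lemma~\ref{lem5} lifts a binary $\beta$-$s$-gale on a union-closed class $X$ to a $\gamma$-$s'$-gale on $\disj X$ via the collapsing map $\overline{w}$, with the splitting absorbed into the condition $\beta(1)^s\ge\gamma(1)^{s'}+\gamma(-1)^{s'}$. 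You instead use the product factorization $\gamma_0=\gamma_1\otimes\gamma_2$ and verify a single product gale $\tilde d=d_1\circ\pi_1\cdot d_2\circ\pi_2$ directly, with all constants made explicit and the feasibility of $s<1$ reduced to the single numerical inequality $3-\log 3>1$. Your route is more self-contained and makes the quantitative slack visible in one place; the paper's route is more modular and yields intermediate statements (Theorems~\ref{the2} and~\ref{the3}) of independent interest for general distributions $\beta$ and $\gamma$. Two small blemishes, neither fatal: you write $3-\log 3\approx 1.585$, but $\log 3\approx 1.585$, so $3-\log 3\approx 1.415$ (the inequality $3-\log 3>1$ you actually use still holds); and the growth bound only forces $\log\tilde d\to\infty$ along the infinitely many $n$ where $d(\chi_{A\cup B}\uhr n)\ge 1$, which is exactly what $\limsup$-success requires but should be stated as such rather than as unconditional divergence.
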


The proof of Theorem \ref{the1}\ is based on the following two
results and Theorem~\ref{thm:psprbd}.

\begin{theorem}\label{the2} Let $\beta$ be a positive
distribution on \bin, $X\subseteq\mathbf{C}$, and $g:\{0,1\}^*\to\{0,1\}^*$.
If $\dim_{\up}^g(X)=0$, then $\dim_{\up,\beta}^g(X)<1$.
\end{theorem}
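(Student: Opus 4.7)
My plan is to convert the uniform $\epsilon$-gale supplied by $\dim_\up^g(X)=0$ into a $\beta$-$s'$-gale via the standard gauge-change formula, and then to argue that taking $\epsilon$ small enough forces $s'$ strictly below~$1$. Since $\beta$ is a positive distribution on $\{0,1\}$, both $\beta(0)$ and $\beta(1)$ lie in $(0,1)$; set $p=\max(\beta(0),\beta(1))<1$ and $h_{\min}=-\log p\in(0,1]$, so that $\beta(w)\leq p^{|w|}$ for every $w\in\{0,1\}^*$.

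First I would fix rationals $\epsilon,s'$ with $0<\epsilon<h_{\min}$ and $\epsilon/h_{\min}<s'<1$; both are available since $\epsilon/h_{\min}<1$. By hypothesis there is a $\up^g$-computable (uniform) $\epsilon$-gale $d$ with $X\subseteq S^\infty[d]$. I would then define
\[
d'(w)\;=\;d(w)\cdot 2^{-\epsilon|w|}\cdot\beta(w)^{-s'},\qquad \beta(w)=\prod_{i<|w|}\beta(w_i),
\]
and a short calculation using $d(w)=2^{-\epsilon}[d(w0)+d(w1)]$ together with $\beta(wa)=\beta(w)\beta(a)$ shows that $\sum_{a\in\{0,1\}}\beta(a)^{s'}d'(wa)=d'(w)$, so $d'$ is a $\beta$-$s'$-gale. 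Its $\up^g$-computability is immediate from that of $d$ together with the polynomial-time computability of the correction factor, which depends only on $|w|$ and the zero-count of $w$ through rational constants.

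For success, the bound $\beta(w)^{-s'}\geq p^{-s'|w|}=2^{s'h_{\min}|w|}$ gives
\[
d'(w)\;\geq\; d(w)\cdot 2^{|w|(s'h_{\min}-\epsilon)}\;\geq\; d(w)
\]
for every $w\in\{0,1\}^*$, since $s'h_{\min}>\epsilon$ and $|w|\geq 0$. Pointwise dominance along the prefixes of any $A\in X$ then yields $\limsup_{w\to A}d'(w)\geq\limsup_{w\to A}d(w)=\infty$, so $X\subseteq S^\infty[d']$ and hence $\dim_{\up,\beta}^g(X)\leq s'<1$.

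The only delicate point is arranging $s'h_{\min}>\epsilon$ while keeping $s'<1$; this is precisely where the full strength of the hypothesis $\dim_\up^g(X)=0$ (as opposed to merely $\dim_\up^g(X)<1$) enters, since we must be free to pick $\epsilon$ arbitrarily small compared with the constant $h_{\min}>0$ fixed by $\beta$. Positivity of $\beta$ is what guarantees $h_{\min}>0$, so the hypotheses of the theorem are exactly what the argument needs.
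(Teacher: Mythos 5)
Your proof is correct and takes essentially the same route as the paper's: the paper factors the argument through a lemma whose gale is defined recursively by $d'(wb)=d'(w)\frac{d(wb)}{2^{s}d(w)}\beta(b)^{-t}$, which unrolls to exactly your closed form $d'(w)=d(w)\,2^{-\epsilon|w|}\beta(w)^{-s'}$. The paper's hypothesis $\max(\beta(0),\beta(1))<2^{-s'/t}$ is the same constraint as your $s'h_{\min}>\epsilon$, and both arguments conclude via the pointwise dominance $d'\geq d$.
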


\begin{theorem}\label{the3}
Let $\beta = (1/4, 3/4)$ and $g:\{0,1\}^*\to\{0,1\}^*$. If $\dim_{\up,\beta}^g(\NP) < 1$, then
\[\dim_{\up}^g(\disjNP)<1.\]
\end{theorem}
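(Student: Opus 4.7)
The plan is to transport a given $\up^g$-computable $\beta$-$s$-gale $d_1$ that succeeds on $\NP$, with $s<1$ (which exists by hypothesis), into a $\up^g$-computable $\gamma_0$-$s''$-gale $d$ on $\{0,1,-1\}^*$ that succeeds on $\disjNP$ for some $s''<1$. The pivot is the length-preserving projection $\pi:\{0,1,-1\}^*\to\{0,1\}^*$ that collapses $1,-1\mapsto 1$ and fixes $0$. Under $\pi$, the ternary encoding $\sigma_{(A,B)}$ of a disjoint pair is sent to $\chi_{A\cup B}$; since $\NP$ is closed under union, $A\cup B\in \NP$, so $d_1$ succeeds on $\pi(\sigma_{(A,B)})=\chi_{A\cup B}$.

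I look for $d$ in the product form
\[
d(w) \;=\; r_0^{|w|_0}\,r_1^{k(w)}\,d_1(\pi(w)),
\]
where $|w|_0$ counts zeros in $w$, $k(w)=|w|-|w|_0$ counts non-zeros, and $r_0,r_1>0$ are constants to be determined. Plugging this ansatz into the $\gamma_0$-$s''$-gale identity and matching it against the $\beta$-$s$-gale identity for $d_1$ forces $r_0=4^{s''-s}$ and $r_1=\frac{(3/4)^s}{2\,(3/8)^{s''}}=3^{s-s''}\,2^{3s''-2s-1}$. With these constants, $d$ is a $\gamma_0$-$s''$-gale by direct verification, and $d$ is $\up^g$-computable because $d_1$ is, $\pi$, $|w|_0$, $k(w)$ are polynomial-time, and $r_0,r_1$ are fixed constants admitting the required dyadic approximations.

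The crux is to choose $s''<1$ so that $r_0\geq 1$ and $r_1\geq 1$, for then $d(\sigma_{(A,B)}\upharpoonright n)\geq d_1(\chi_{A\cup B}\upharpoonright n)$ for every $n$, and the right side has $\limsup=\infty$, yielding success of $d$ on $\sigma_{(A,B)}$. The condition $r_0\geq 1$ is $s''\geq s$, and $r_1\geq 1$ reorganizes to $s''\geq \alpha s+(1-\alpha)$ with $\alpha=\frac{2-\log_2 3}{3-\log_2 3}$. The key algebraic observation is that this lower bound is a convex combination of $s$ and $1$, so it always lies in $[s,1)$ when $s<1$, and moreover dominates $s$ so the $r_0$-constraint is automatic. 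Any $s''$ strictly between $\alpha s+(1-\alpha)$ and $1$ therefore works, giving $\dim_\up^g(\disjNP)\leq s''<1$.

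The main obstacle is this concluding algebraic step: the construction goes through precisely because $\gamma_0=(1/4,3/8,3/8)$ is the refinement of $\beta=(1/4,3/4)$ obtained by splitting the $3/4$-weighted symbol evenly, and it is exactly this structural coincidence that produces a convex-combination lower bound on $s''$ and keeps it strictly below $1$ whenever $s<1$.
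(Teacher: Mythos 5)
Your proposal is correct and is essentially the paper's own argument (Lemma~\ref{lem5} specialized to $\gamma=\gamma_0$, $\beta=(1/4,3/4)$): your product-form gale $r_0^{|w|_0}r_1^{k(w)}d_1(\pi(w))$ is exactly the paper's recursively defined $D$, and your conditions $r_0\geq 1$, $r_1\geq 1$ are its conditions $\beta(0)^s\geq\gamma(0)^{s'}$ and $\beta(1)^s\geq\gamma(1)^{s'}+\gamma(-1)^{s'}$. The only difference is that you verify explicitly, via the convex-combination bound $s''>\alpha s+(1-\alpha)$, that a suitable $s''<1$ exists, a point the paper leaves to the reader.
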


Theorem \ref{the2} is a consequence of the following lemma.

\begin{lem}\label{lem4}

Let $g:\{0,1\}^*\to\{0,1\}^*$, let $s$ be such that $\dim_{\up}^g(X)<s$, and let $\beta$ be a distribution
on \bin. If $\max(\beta(0), \beta(1))< 2^{-s}$, then
$\dim_{\up,\beta}^g(X)<1$.
\end{lem}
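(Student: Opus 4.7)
The plan is to start from a $\up^g$-computable $s$-gale witnessing $\dim_\up^g(X) < s$ and convert it into a $\up^g$-computable $\beta$-$t$-gale that still succeeds on $X$ for some $t<1$, thereby giving $\dim_{\up,\beta}^g(X) \leq t < 1$.

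Let $\alpha = \max(\beta(0), \beta(1))$; by hypothesis $\alpha < 2^{-s}$, and positivity of $\beta$ keeps $\alpha \in [1/2, 1)$. I would first pick a rational $s_0$ with $\dim_\up^g(X) < s_0 \le s$ and $\alpha < 2^{-s_0}$ (possible because both constraints are open in $s_0$), and then a rational $t \in (0,1)$ close enough to $1$ that $\alpha^t < 2^{-s_0}$, which exists by continuity of $t \mapsto \alpha^t$. Next I would fix a $\up^g$-computable $s_0$-gale $d$ with $X \subseteq S^\infty[d]$ and define
\[
d'(w) = d(w) \cdot \prod_{i=0}^{|w|-1} \frac{2^{-s_0}}{\beta(w_i)^t}.
\]

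Three things then need verification. First, $d'$ is a $\beta$-$t$-gale: substituting into $d'(w0)\beta(0)^t + d'(w1)\beta(1)^t$, the factors $\beta(a)^t$ cancel the $1/\beta(a)^t$ introduced by extending $w$ by $a$, and the $s_0$-gale identity $d(w) = 2^{-s_0}[d(w0)+d(w1)]$ collapses the expression back to $d'(w)$. Second, $d'$ succeeds on every $A \in X$: since $\beta(w_i) \leq \alpha$ for each bit $w_i$,
\[
d'(w) \geq d(w) \cdot \left(\frac{2^{-s_0}}{\alpha^t}\right)^{|w|},
\]
and by choice of $t$ the base exceeds $1$, so $\limsup_{w \to A} d(w) = \infty$ forces $\limsup_{w \to A} d'(w) = \infty$. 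Third, $d'$ is $\up^g$-computable: the correction factor depends only on the bit counts of $w$ and the fixed rationals $s_0, t$, requires no oracle queries, and is approximable to precision $2^{-r}$ in time polynomial in $r$ and $|w|$, so composing with the rational approximation of $d$ keeps $d'$ inside $\up^g$.

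I do not foresee a serious obstacle. The most delicate step is the choice of $t$: the hypothesis $\alpha < 2^{-s}$ is exactly the slack required to absorb, along every branch, the shift from the uniform distribution at parameter $s_0$ to the biased distribution $\beta$ at parameter $t$ slightly below $1$, and this is the one spot where a quantitative argument (continuity of $t \mapsto \alpha^t$) is used.
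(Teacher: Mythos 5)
Your proposal is correct and follows essentially the same route as the paper: both convert the $s$-gale $d$ into a $\beta$-$t$-gale by multiplying by $\prod_i 2^{-s}/\beta(w_i)^t$ for a suitable $t<1$ chosen so that $\max(\beta(0),\beta(1))^t < 2^{-s}$, and then observe $d'(w)\geq d(w)$ along every branch. Your treatment of the choice of $t$ and of $\up^g$-computability is if anything slightly more explicit than the paper's.
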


\begin{proof}[Proof of Lemma \ref{lem4}]
Let $s'> s$ and $t\in (0,1)$ be such that $\max(\beta(0), \beta(1))<2^{-s'/t}$. Let $d$
be a $\up^g$-computable $s$-gale. Define
\[d'(wb)=d'(w) \frac{d(wb)}{2^{s}d(w)}\frac{1}{\beta(b)^{t}}.\]

Then $d'$ is a $\up^g$-computable $\beta$-$t$-gale. Furthermore,

\[d'(w)\ge d (w) 2^{-s|w|}\frac{1}{\beta(w)^{t}} > d (w) 2^{-s|w|}
2^{s'|w|},\]
and therefore $\regSS[d]\subseteq \regSS[d']$. 

\end{proof}

Theorem \ref{the3}\ is a consequence of the following lemma.

\begin{lem}\label{lem5}

Let $g:\{0,1\}^*\to\{0,1\}^*$, $\gamma$ a positive distribution on \trialph, $\beta$ a distribution
on \bin\ with $\beta(0)=\gamma(0)$, and $X\subseteq\mathbf{C}$ a class that is closed under union. If $\dim_{\up,\beta}^g(X)<1$,
then $\dim_{\up,\gamma}^g(\disj X)<1$.

\end{lem}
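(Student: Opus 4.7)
The plan is to lift a $\up^g$-computable $\beta$-$s$-gale $d$ succeeding on $X$ (for some $s<1$ supplied by the hypothesis) to a $\up^g$-computable $\gamma$-$s'$-gale $d'$ succeeding on $\disj X$ for some $s'<1$. The key observation is the symbol-wise collapse map $\phi:\trialphd^*\to\{0,1\}^*$ defined by $\phi(0)=0$ and $\phi(1)=\phi(-1)=1$: if $(A,B)$ is a disjoint pair with \trialph-representation $\sigma$, then $\phi(\sigma)$ is the $\{0,1\}$-characteristic sequence of $A\cup B$. Because $X$ is closed under union, $A\cup B\in X$ whenever $(A,B)\in\disj X$, so $d(\phi(w))$ is unbounded along the prefixes $w$ of $\sigma$.

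Next I would set $d'(w)=d(\phi(w))\,F(w)$ and solve for the correction $F:\trialphd^*\to(0,\infty)$ that makes $d'$ a $\gamma$-$s'$-gale. Matching coefficients forces $F(\lambda)=1$, $F(w0)=F(w)\,\beta(0)^{s-s'}$, and $F(w1)=F(w(-1))=F(w)\,\beta(1)^{s}/(\gamma(1)^{s'}+\gamma(-1)^{s'})$. A short calculation that invokes $\gamma(0)=\beta(0)$ and the $\beta$-$s$-gale identity for $d$ then confirms $d'(w)=\sum_{b}d'(wb)\,\gamma(b)^{s'}$. The resulting $d'$ inherits $\up^g$-computability from $d$, since $\phi$ and the rational powers appearing in $F$ can be approximated to any desired precision in polynomial time.

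The remaining, and main, obstacle is choosing $s'\in(s,1)$ so that $F(w)\ge 1$ for every $w$; with such a choice one has $d'(w)\ge d(\phi(w))$, whence $d'$ succeeds on every $(A,B)\in\disj X$. The condition $F\ge 1$ reduces to two inequalities: $\beta(0)^{s-s'}\ge 1$, which holds whenever $s'\ge s$ (since $\beta(0)<1$), and $\gamma(1)^{s'}+\gamma(-1)^{s'}\le\beta(1)^{s}$. The second is the delicate one. At $s'=1$ its left side equals $\gamma(1)+\gamma(-1)=1-\gamma(0)=1-\beta(0)=\beta(1)$, while its right side is $\beta(1)^{s}>\beta(1)$ because $\beta(1)\in(0,1)$ and $s<1$; the strict inequality at $s'=1$ together with continuity in $s'$ then yields an admissible $s'\in(s,1)$, giving $\dim_{\up,\gamma}^{g}(\disj X)\le s'<1$ and completing the argument.
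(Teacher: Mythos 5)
Your proposal is correct and follows essentially the same route as the paper: collapse the ternary sequence to the characteristic sequence of $A\cup B$, multiply the given $\beta$-$s$-gale by a per-symbol correction factor (your $F$ reproduces exactly the paper's ratios $\beta(0)^s/\gamma(0)^{s'}$ and $\beta(1)^s/(\gamma(1)^{s'}+\gamma(-1)^{s'})$), and pick $s'<1$ so the correction is $\ge 1$. Your continuity argument at $s'=1$ for the existence of an admissible $s'$ is a detail the paper leaves implicit, and it is a welcome addition.
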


\begin{proof}[Proof of Lemma \ref{lem5}] If 
$\dim_{\up,\beta}^g(X)<s<1$ and $d$ is a $\up^g$-computable $\beta$-$s$ gale succeeding on $X$,
let
$ s' \in (0,1)$ with $\beta(1)^s \ge \gamma(1)^{s'}+
\gamma(-1)^{s'}$ and $\beta(0)^s\ge \gamma(0)^{s'}$.

We define a $\up^g$-computable $\gamma$-$s'$ gale $D$ by
\begin{eqnarray*}D(w0)&=&D(w)\frac{d(\overline{w}0)}{d(\overline{w})}\frac{\beta(0)^s}{\gamma(0)^{s'}}\\
D(w1)=D(w-1)&=&D(w)\frac{d(\overline{w}1)}{d(\overline{w})}\frac{\beta(1)^s}{\gamma(1)^{s'}+\gamma(-1)^{s'}},\\
\end{eqnarray*}
where
\begin{eqnarray*}\overline{w}[i]=0&\mbox{if}&w[i]=0\\
\overline{w}[i]=1&\mbox{if}&w[i]=1\ \mathrm{ or }\ w[i]=-1.\\
\end{eqnarray*}
That is, if $w$ is a prefix of $(A, B)$ then $\overline{w}$ is a
prefix of $A\cup B$.

Notice that $D(w)\ge d(\overline{w})$ for every $w$.

Thus if $(A,B)\in\disj X$, then $A\cup B \in X$ and $D$ succeeds on
$(A, B)$.
\end{proof}

\begin{proof}[Proof of Theorem \ref{the1}]
	We prove the contrapositive. Suppose that
	 $\dim(\NP\mid\EXP)=0$. By Theorem~\ref{thm:psprbd}, there is a $g\in\qp$ such that $\dim_{\up}^g(\NP)=0$.
	 
	 Let $\beta = (1/4, 3/4)$. By Theorem~\ref{the2}, $\dim_{\up,\beta}^g(\NP)<1$. By Theorem~\ref{the3},
	 $\dim_{\up}^g(\disjNP)<1$.
	 
	 Using Theorem~\ref{thm:psprbd} again, $\dim(\disjNP\mid\disjEXP)= \dim_{\qp}(\disjNP)<1$.
\end{proof}

\bibliographystyle{plain}
\bibliography{dscc}

\end{document}